\def\squareforqed{\hbox{\rlap{$\sqcap$}$\sqcup$}}
\def\qed{\ifmmode\squareforqed\else{\unskip\nobreak\hfil
\penalty50\hskip1em\null\nobreak\hfil\squareforqed
\parfillskip=0pt\finalhyphendemerits=0\endgraf}\fi}
\def\endenv{\ifmmode\;\else{\unskip\nobreak\hfil
\penalty50\hskip1em\null\nobreak\hfil\;
\parfillskip=0pt\finalhyphendemerits=0\endgraf}\fi}
\newtheorem{theorem}{Theorem}
\newtheorem{lemma}[theorem]{Lemma}
\newtheorem{proposition}[theorem]{Proposition}
\newenvironment{proof}[1][Proof]{\noindent\textbf{Proof.} }{\hfill\qed}
\newenvironment{proofof}[1][Proof]{\noindent\textbf{Proof~#1.} }{\hfill\qed}
\newcommand{\ket}[1]{|#1\rangle}
\mathchardef\ordinarycolon\mathcode`\:
\def\vcentcolon{\mathrel{\mathop\ordinarycolon}}
\newcommand{\nc}{\newcommand}
\nc{\rnc}{\renewcommand}
\nc{\beq}{\begin{equation}}
\nc{\eeq}{{\end{equation}}}
\nc{\beqa}{\begin{eqnarray}}
\nc{\eeqa}{\end{eqnarray}}
\nc{\lbar}[1]{\overline{#1}}
\nc{\ketbra}[2]{|#1\rangle\!\langle#2|}
\nc{\proj}[1]{| #1\rangle\!\langle #1 |}
\nc{\avg}[1]{\langle#1\rangle}
\rnc{\max}{\operatorname{max}}
\nc{\Rank}{\operatorname{Rank}}
\nc{\smfrac}[2]{\mbox{$\frac{#1}{#2}$}}
\nc{\tr}{\operatorname{Tr}}
\nc{\ox}{\otimes}
\nc{\dg}{\dagger}
\nc{\dn}{\downarrow}
\nc{\cA}{\mathcal{A}}
\nc{\cB}{\mathcal{B}}
\nc{\cC}{\mathcal{C}}
\nc{\cD}{\mathcal{D}}
\nc{\cE}{\mathcal{E}}
\nc{\cF}{\mathcal{F}}
\nc{\cG}{\mathcal{G}}
\nc{\cH}{\mathcal{H}}
\nc{\cI}{\mathcal{I}}
\nc{\cJ}{\mathcal{J}}
\nc{\cK}{\mathcal{K}}
\nc{\cL}{\mathcal{L}}
\nc{\cM}{\mathcal{M}}
\nc{\cN}{\mathcal{N}}
\nc{\cO}{\mathcal{O}}
\nc{\cP}{\mathcal{P}}
\nc{\cR}{\mathcal{R}}
\nc{\cS}{\mathcal{S}}
\nc{\cT}{\mathcal{T}}
\nc{\cX}{\mathcal{X}}
\nc{\cZ}{\mathcal{Z}}
\nc{\supp}{{\operatorname{supp}}}
\nc{\qsupp}{{\operatorname{qsupp}}}
\nc{\var}{\operatorname{var}}
\nc{\rar}{\rightarrow}
\nc{\lrar}{\longrightarrow}
\nc{\polylog}{\operatorname{polylog}}
\nc{\id}{{\operatorname{id}}}
\nc{\RR}{{{\mathbb R}}}
\nc{\CC}{{{\mathbb C}}}
\nc{\FF}{{{\mathbb F}}}
\nc{\NN}{{{\mathbb N}}}
\nc{\ZZ}{{{\mathbb Z}}}
\nc{\PP}{{{\mathbb P}}}
\nc{\QQ}{{{\mathbb Q}}}
\nc{\UU}{{{\mathbb U}}}
\nc{\EE}{{{\mathbb E}}}
\nc{\rG}{{{\mathrm G}}}
\nc{\be}{\begin{equation}}
\nc{\ee}{{\end{equation}}}
\nc{\bea}{\begin{eqnarray}}
\nc{\eea}{\end{eqnarray}}
\nc{\Hom}[2]{\mbox{Hom}(\CC^{#1},\CC^{#2})}
\nc{\rU}{\mbox{U}}
\def\sep{\mathinner{\mathrm{SEP}}}
\nc{\LO}{\mathsf{LO}}
\nc{\ONELOCC}{\mathsf{1\text{-}LOCC}}
\nc{\LOCC}{{\mathsf{LOCC}}}
\nc{\SEP}{{\mathsf{SEP}}}
\nc{\PPT}{{\mathsf{PPT}}}
\nc{\ERLO}{{E_{r,\mathsf{LO}}}}
\nc{\ERONELOCC}{{E_{r,\mathsf{1\text{-}LOCC}}}}
\nc{\ERLOCC}{{E_{r,\mathsf{LOCC}}}}
\nc{\ERSEP}{{E_{r,\mathsf{SEP}}}}
\nc{\ERPPT}{{E_{r,\mathsf{PPT}}}}
\nc{\ERLOCCinfty}{{E^{\infty}_{r,\mathsf{LOCC}}}}
\begin{document}

\title{Relative entropy and squashed entanglement}

\author{Ke Li}
    \email{carl.ke.lee@gmail.com}
    \affiliation{Centre for Quantum Technologies, National University of Singapore, 3 Science Drive 2, Singapore 117543}
\author{Andreas Winter}
	\email{der.winter@gmail.com}
	\affiliation{ICREA -- Instituci\'{o} Catalana de Recerca i Estudis Avan\c{c}ats, Pg.~Lluis Companys 23, ES-08010 Barcelona, Spain}
    \affiliation{F\'{\i}sica Te\`{o}rica: Informaci\'{o} i Fenomens Qu\`{a}ntics, Universitat Aut\`{o}noma de Barcelona, ES-08193 Bellaterra (Barcelona), Spain}
	\affiliation{Department of Mathematics, University of Bristol, Bristol BS8 1TW, U.K.}
	\affiliation{Centre for Quantum Technologies, National University of Singapore, 3 Science Drive 2, Singapore 117543}
	
\date{15 October 2012}


\begin{abstract}
We are interested in the properties and relations of entanglement measures. Especially, we
focus on the squashed entanglement and relative entropy of entanglement, as well as their
analogues and variants.

Our first result is a monogamy-like inequality involving the relative entropy of entanglement
and its one-way LOCC variant. The proof is accomplished by exploring the properties of
relative entropy in the context of hypothesis testing via one-way LOCC operations, and by
making use of an argument resembling that by Piani on the faithfulness of regularized
relative entropy of entanglement.

Following this, we obtain a commensurate and faithful lower bound for
squashed entanglement, in the form of one-way LOCC relative entropy of entanglement.
This gives a strengthening to the strong subadditivity of von Neumann entropy. Our result improves the
trace-distance-type bound derived in [Comm. Math. Phys., 306:805-830, 2011], where
faithfulness of squashed entanglement was first proved. Applying Pinsker's inequality, we
are able to recover the trace-distance-type bound, even with slightly better constant factor.
However, the main improvement is that our new lower bound can be much larger than the old one
and it is almost a genuine entanglement measure.

We evaluate exactly the various relative entropy of entanglement under restricted measurement
classes, for maximally entangled states. Then, by proving asymptotic continuity, we extend the
exact evaluation to their regularized versions for all pure states. Finally, we consider
comparisons and separations between some important entanglement measures and obtain several
new results on these, too.
\end{abstract}

\maketitle

\section{Squashed entanglement and other entanglement measures}
\label{sec:intro}
As an important concept in quantum mechanics, entanglement plays a central role in
quantum information processing. It is the resource responsible for the
quantum computational speed-up, quantum communication, quantum cryptography and
so on.
Mathematically, quantum entanglement is the the most outstanding non-classical feature of
compound states that cannot be decomposed as statistical mixtures of product states over
subsystems, and has been found to possess a very rich structure. There exist many entanglement
measures, defined under various motivations and each characterizing some of its
features. The properties and relations of these entanglement measures are very much desirable
for our understanding of entanglement. Despite considerable achievements, a lot of issues
still remain unclear, even in the bipartite case~\cite{Horodeckis07}.

Among all the existing entanglement measures, squashed
entanglement~\cite{Tucci99, Tucci02, Christandl-Winter03} is a particularly
interesting one, with many desirable properties.
In analogy to the classical intrinsic information~\cite{Maurer-Wolf99},
squashed entanglement of a bipartite quantum state $\rho_{AB}$ is
defined as
\begin{equation}
  \label{eq:Esq-definition}
  E_{sq}(\rho_{AB}):=\inf\left\{\frac{1}{2}I(A;B|E)_\rho: \rho_{ABE}
                                            \text{ is an extension of } \rho_{AB}\right\},
\end{equation}
where $I(A;B|E)_\rho$ is the quantum conditional mutual information of $\rho_{ABE}$,
\begin{equation}
  I(A;B|E)_\rho:=S(\rho_{AE})+S(\rho_{BE})-S(\rho_{ABE})-S(\rho_{E})
\end{equation}
with the von Neumann entropy $S(\rho):=-\tr \rho\log\rho$. Squashed entanglement satisfies
most of the properties that are desired or useful for an entanglement measure. For example,
it is monotone under LOCC operations, convex and asymptotically continuous as a function of
quantum states, monogamous among one party and other parties, additive on tensor products
and superadditive in general~\cite{Christandl-Winter03, Alicki-Fannes04, Koashi-Winter04}.
Moreover, squashed entanglement admits an operational interpretation:
it is the minimum rate of qubits transmission at which a quantum state can be
redistributed among two parties when arbitrary (quantum) side information is
permitted~\cite{Devetak-Yard07, Yard-Devetak08, Ye-Bai-Wang08, Oppenheim08}.

Quantum relative entropy, given by
\[
  D(\rho\|\sigma) = \begin{cases}
                      \tr(\rho(\log\rho-\log\sigma)) & \text{ if }\supp(\rho)\subseteq\supp(\sigma), \\
                      +\infty                        & \text{ otherwise,}
                    \end{cases}
\]
measures the distinguishability of two states $\rho$ and $\sigma$ in the context of asymmetric
hypothesis testing~\cite{Hiai-Petz91, Ogawa-Nagaoka00}.
Yet it has found important applications in other aspects of quantum information theory:
The relative entropy of entanglement~\cite{VPRK97, Vedral-Plenio98} is another entanglement
measure that is of fundamental importance. For composite system $A \ox B$,
let $\sep(A:B)$ be the set of all separable states, i.e., the states of
the form $\sigma_{AB}=\sum_ip_i\sigma^A_i\ox\sigma^B_i$.
Relative entropy of entanglement,
\begin{equation}
  E_r(\rho_{AB}):=\min_{\sigma_{AB}\in\sep} D(\rho\|\sigma),
\end{equation}
quantifies the amount of entanglement of a state $\rho_{AB}$, by its relative entropy
``distance'' to the nearest separable state. Since relative entropy of entanglement is
strictly subadditive~\cite{Voll-Werner01}, it is more meaningful in many circumstances
to use its regularization,
\[
  E^\infty_r(\rho_{AB}):=\lim\limits_{n\rightarrow\infty}\frac{1}{n}E_r(\rho_{AB}^{\ox n}).
\]
Brand\~{a}o and Plenio have provided operational interpretations
to $E^\infty_r$: it quantifies the optimal rate of transformation between a quantum
state and maximally entangled states under non-entangling
operations~\cite{Brandao-Plenio-1, Brandao-Plenio-2}, and it is also the best error
exponent in quantum hypothesis testing where one of the hypothesis is many copies of the
state and the other one is the set of separable states~\cite{Brandao-Plenio-3}.

For each positive operator-valued measurement (POVM) $\{M_i\}_i$, it can be alternatively
identified with a measurement operation $\cM$, which is a completely positive map from
density matrices to probability vectors,
\[
  \cM(\omega)=\sum_i\proj{i}\tr(\omega M_i).
\]
On composite system $AB$, we define some restricted classes of measurements $\LO$,
$\ONELOCC$, $\LOCC$, $\SEP$ and $\PPT$. Here $\LO$, $\ONELOCC$ and $\LOCC$ are the sets
of measurements that can be implemented by means of local operations, local operations
and one-way classical communication, local operations and arbitrary two-way classical
communication, respectively; $\SEP$ and $\PPT$ are the classes of measurements whose POVM
elements are separable or positive-partial-transpose, respectively. Without loss of generality,
we assume that the one-way classical communication in $\ONELOCC$ is always from $A$
to $B$.

We see from the definition that squashed entanglement is always non-negative, due to the
strong subadditivity of von Neumann entropy, which states that the quantum conditional
mutual information can not be negative~\cite{Lieb-Ruskai73}. However, until very recently
proven in~\cite{BCY10}, the faithfulness of squashed entanglement, meaning that a bipartite
quantum state has non-vanishing squashed entanglement if and only if it is entangled, had
been a long-standing open question. Note that the infimum in the definition of
Eq.~(\ref{eq:Esq-definition}) cannot be replaced by minimum, because no bound on the
dimension of the system $E$ is known. As a result, the evaluation of squashed entanglement
becomes very difficult.

The main result of the proof in~\cite{BCY10} is the following inequality:
\begin{equation}
  \label{eq:BCY-bound}
  E_{sq}(\rho_{AB})\geq \frac{1}{16\ln 2}\min_{\sigma_{AB}\in\sep}\left\|\rho_{AB}-\sigma_{AB}
                                                                   \right\|^2_{\ONELOCC},
\end{equation}
where
\[
  \left\|\rho_{AB}-\sigma_{AB}\right\|_{\ONELOCC}
    := \sup_{\cM\in\ONELOCC} \left\|\cM(\rho_{AB})-\cM(\sigma_{AB})\right\|
\]
defines a metric (in fact, a norm) on density operators~\cite{MWW08}.

\medskip
The rest of the paper is structured as follows.
In Section~\ref{sec:main results} we state our main results. Then, after
considering quantum hypothesis testing under one-way LOCC measurements and obtaining a key
technical lemma in Section~\ref{sec:hypo}, we prove these results in
Sections~\ref{sec:proof-of-monogamy-newbound}, \ref{sec:proof-asycontinuity}
and \ref{sec:MES-Pure-states}, respectively. In Section~\ref{sec:comparison},
we deal with the comparisons and separations between entanglement measures and end
the paper with a few open questions.

\section{Main results}
\label{sec:main results}
Before presenting the results, we introduce the variants of relative entropy of
entanglement, which will be involved intensively later.
Piani defined the relative entropy of entanglement with respect to the set of states
$\rG$ and the restricted class of measurements $\mathsf{M}$~\cite{Piani09}, as
\begin{equation}
  \label{eq:ERGM-definition}
  E^{(\rG)}_{r,\mathsf{M}}(\rho)
     := \inf_{\sigma\in\rG} \sup_{\cM\in\mathsf{M}} D\bigl( \cM(\rho) \| \cM(\sigma) \bigr).
\end{equation}
Using this entanglement measure, he proved that $E_r^\infty$ is faithful, i.e.,
$E_r^\infty(\rho_{AB})>0$ if and only if $\rho_{AB}$ is entangled (same result was derived
in~\cite{Brandao-Plenio-3} independently).

In this paper, $\rG$ is usually the set of separable states $\sep$. Therefore, we abbreviate
$E^{(\sep)}_{r,\mathsf{M}}$ to $E_{r,\mathsf{M}}$ for simplicity.

\medskip\noindent
{\bf Monogamy relation for relative entropy of entanglement.}
One of the most fundamental properties of entanglement is monogamy: the more a quantum
system is entangled with another, then the less it is entangled with the others. For
any entanglement measure $f$, one would expect a quantitative characterization of
monogamy of the form
\[f(\rho_{1:23})\geq f(\rho_{1:2})+ f(\rho_{1:3}).
\]
Although this is really the case for squashed entanglement~\cite{Koashi-Winter04},
relative entropy of entanglement -- along with many other entanglement measures --
does not satisfy such a strong relation, with the antisymmetric state being a
counterexample~\cite{CSW09-1, CSW09-2}.

Here, we propose and prove a properly weakened monogamy inequality for relative
entropy of entanglement, by invoking its one-way LOCC variant.

\begin{theorem}
  \label{theorem:Er-monogamy}
  For every tripartite quantum state $\rho_{ABE}$, we have
  \begin{equation}
  \label{eq:Er-monogamy-a}
    E_r(\rho_{B:AE})\geq E_{r,\ONELOCC}(\rho_{AB})+E_r^\infty(\rho_{BE}),
  \end{equation}
  and
  \begin{equation}
  \label{eq:Er-monogamy-b}
    E_r^\infty(\rho_{B:AE})\geq E_{r,\ONELOCC}^\infty(\rho_{AB})+E_r^\infty(\rho_{BE}).
  \end{equation}
\end{theorem}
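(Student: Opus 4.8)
The plan is to prove the single-letter bound (\ref{eq:Er-monogamy-a}) first and then deduce (\ref{eq:Er-monogamy-b}) from it by regularization: applying (\ref{eq:Er-monogamy-a}) to $\rho_{ABE}^{\otimes n}$ (grouping $A^n,B^n,E^n$), dividing by $n$ and letting $n\to\infty$ turns the left-hand side into $E_r^\infty(\rho_{B:AE})$ and the first right-hand term into $E_{r,\ONELOCC}^\infty(\rho_{AB})$, while $\frac1n E_r^\infty(\rho_{BE}^{\otimes n})=E_r^\infty(\rho_{BE})$ by additivity of the regularized measure. So everything reduces to (\ref{eq:Er-monogamy-a}). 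For that inequality, I would fix a separable (across $B:AE$) state $\sigma_{BAE}$ with $E_r(\rho_{B:AE})=D(\rho_{BAE}\|\sigma_{BAE})$; its marginals $\sigma_{AB}$ and $\sigma_{BE}$ are then separable across $A:B$ and $B:E$. Because relative entropy is additive on tensor powers, $E_r(\rho_{B:AE})=\frac1n D(\rho_{BAE}^{\otimes n}\|\sigma_{BAE}^{\otimes n})$ for every $n$, which by the quantum Stein's lemma equals the optimal error exponent for discriminating $\rho_{BAE}^{\otimes n}$ from $\sigma_{BAE}^{\otimes n}$. Since the exponent of any fixed sequence of tests lower-bounds this quantity, it suffices to construct tests on $A^nB^nE^n$ whose exponent against $\sigma_{BAE}^{\otimes n}$ approaches $E_{r,\ONELOCC}(\rho_{AB})+E_r^\infty(\rho_{BE})$.

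I would build such a test from two ingredients. First, a one-way LOCC test $P_n$ on $A^nB^n$: taking the single-copy one-way LOCC measurement $\cM$ optimal for $\sup_{\cM}D(\cM(\rho_{AB})\|\cM(\sigma_{AB}))$ and applying the classical Stein lemma to the product distributions $\cM(\rho_{AB})^{\otimes n}$ versus $\cM(\sigma_{AB})^{\otimes n}$ gives $\tr(\rho_{AB}^{\otimes n}P_n)\to1$ with type-II exponent at least $E_{r,\ONELOCC}(\rho_{AB})$ against $\sigma_{AB}^{\otimes n}$ (using that $\sigma_{AB}$ is feasible for the infimum defining $E_{r,\ONELOCC}$; this measured hypothesis-testing statement under one-way LOCC is the content of the key lemma of Section~\ref{sec:hypo}, in the spirit of Piani). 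Second, the generalized quantum Stein's lemma of Brand\~ao--Plenio supplies a global test $Q_n$ on $B^nE^n$ with $\tr(\rho_{BE}^{\otimes n}Q_n)\to1$ and $\tr(\nu_n Q_n)\le 2^{-n(E_r^\infty(\rho_{BE})-o(1))}$ \emph{uniformly} over all separable $\nu_n\in\sep(B^n\!:\!E^n)$. I then compose them sequentially---perform $P_n$ on $A^nB^n$, then $Q_n$ on $B^nE^n$, and accept only if both accept. On the $\rho^{\otimes n}$ side the first test accepts with probability $\to1$, so by the gentle-measurement lemma the post-measurement state is $o(1)$-close to $\rho_{ABE}^{\otimes n}$ and still passes $Q_n$; hence the combined type-I error vanishes.

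The hard part is the combined type-II error, i.e.\ the probability $\tr\!\big(\sqrt{P_n}\,Q_n\sqrt{P_n}\,\sigma_{BAE}^{\otimes n}\big)$ that both tests accept under the separable reference. I must show this is bounded by the \emph{product} of the two individual type-II errors, up to subexponential factors, so that the exponents add to $E_{r,\ONELOCC}(\rho_{AB})+E_r^\infty(\rho_{BE})$. This is exactly where separability across the $B:AE$ cut is essential: it forces the shared system $B$ to be only classically correlated with $AE$ under $\sigma$, so that $B$ acts as mere shared randomness and cannot help a separable state pass both the $A:B$ test and the $B:E$ test at once. The principal obstacles are the non-commutativity of the two tests on the shared $B^n$ and the ``bystander'' $A$--$E$ correlations allowed inside $\sigma_{AE}$: after the first test the effective reference seen by $Q_n$ on $B^nE^n$ is a \emph{conditional} state, and one must certify that it is (or is dominated by) a separable state so that the uniform Brand\~ao--Plenio bound applies. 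This accounting is precisely what a naive single-copy chain-rule decomposition---measure $A$, split $D(\rho_{BAE}\|\sigma_{BAE})$ by the chain rule, and bound each conditional $B:E$ term against the separable $\sigma_{BE}^{(x)}$---fails to do tightly (it double-counts the relative entropy carried by $B$), which is the reason the asymptotic, hypothesis-testing formulation is indispensable.
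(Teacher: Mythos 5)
Your overall strategy---reduce (\ref{eq:Er-monogamy-b}) to (\ref{eq:Er-monogamy-a}) by regularization, fix the optimal separable $\sigma_{BAE}$, use additivity to write $E_r(\rho_{B:AE})=\frac1n D(\rho_{BAE}^{\ox n}\|\sigma_{BAE}^{\ox n})$, and lower-bound this by the type-II exponent of an explicit sequence of tests with vanishing type-I error---is sound, and its second half genuinely differs from the paper's proof. But the proposal has a real gap, and it sits exactly where you flag it: the claim that the combined type-II error $\tr\bigl(\sqrt{P_n}\,Q_n\sqrt{P_n}\,\sigma_{BAE}^{\ox n}\bigr)$ factorizes, up to subexponential terms, into the product of the individual errors is never proved, and in the form you wrote it the needed separability certificate is unavailable. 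Writing $\sigma_{BAE}^{\ox n}=\sum_{j^n}q_{j^n}\,\sigma_{B^n}^{j^n}\ox\sigma_{A^nE^n}^{j^n}$, what you must show is that $\tr_{A^n}\bigl(\sqrt{P_n}\,\sigma_{BAE}^{\ox n}\sqrt{P_n}\bigr)$ equals $\beta_n=\tr\bigl(P_n\sigma_{AB}^{\ox n}\bigr)$ times a state separable across $B^n:E^n$, so that the uniform Brand\~ao--Plenio bound applies to it. However, $\sqrt{P_n}$ of a one-way LOCC POVM element $P_n=\sum_k R_k\ox T_k$ is in general not a product operator, and conjugation by the single Kraus operator $\sqrt{P_n}$ is not a separable map; it can in principle transfer the $A^n$--$E^n$ correlations inside $\sigma_{A^nE^n}^{j^n}$ (which may even be entangled across $A:E$) into $B^n$--$E^n$ correlations, and you give no argument ruling this out. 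The cure is to implement the first test not via $\sqrt{P_n}$ but as a genuine one-way LOCC instrument with \emph{product} Kraus operators $\sqrt{R_{k^n}}\ox Q_{k^n,x}$---precisely the instrument $\cT_n$ of Lemma~\ref{lemma:nondemolition-measurement}, Eq.~(\ref{eq:instrument-definition}). Then each accept-branch term $\tr_{A^n}\bigl[(\sqrt{R_{k^n}}\ox Q_{k^n,\text{Null}})(\sigma_{B^n}^{j^n}\ox\sigma_{A^nE^n}^{j^n})(\sqrt{R_{k^n}}\ox Q_{k^n,\text{Null}})\bigr]$ is a product positive operator on $B^n\ox E^n$, so the conditional reference state is manifestly separable with trace $\beta_n$, and the exponents add as you want.

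That substitution, however, breaks your type-I argument as stated: for the LOCC instrument the global post-measurement state is \emph{not} close to $\rho_{ABE}^{\ox n}$---Alice's measurement unavoidably damages her side, as the paper stresses just before Lemma~\ref{lemma:nondemolition-measurement}---so the one-line gentle-measurement step is false there, while it is valid only for the $\sqrt{P_n}$ implementation, which you cannot use. What does hold, and is the nontrivial content of Eq.~(\ref{eq:state-undisturb}) (proved via the $\Lambda$-trick with the gentle measurement lemma~\cite{Winter98}), is that the $B^nE^n$ \emph{marginal} of the accept branch is close to $\rho_{BE}^{\ox n}$; since $Q_n$ acts only on $B^nE^n$, this weaker statement suffices for your purposes. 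In short, the two halves of your argument tacitly use two different implementations of the same test, and reconciling them is exactly Lemma~\ref{lemma:nondemolition-measurement}. Once repaired, your route works, but note what it costs relative to the paper's: the paper never invokes the generalized quantum Stein's lemma. It instead applies monotonicity of relative entropy under $\cT_n\ox\id$, splits the resulting classical-quantum relative entropy by the chain rule, lower-bounds the quantum part by $E_r\bigl(\cT_n^q\ox\id(\rho^{\ox n})\bigr)$ via joint convexity (the averaged conditional $\sigma$-states remain separable), and converts $\frac1n E_r$ of the almost-undisturbed state into $E_r^\infty(\rho_{BE})$ by Donald--Horodecki asymptotic continuity~\cite{Donald-Horodecki99}; that argument is self-contained, whereas yours imports \cite{Brandao-Plenio-3} as a black box. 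Relatedly, your closing remark that a chain-rule decomposition must fail is too quick: the paper's proof \emph{is} a chain-rule decomposition, applied to the instrument output and with the conditional quantum part compared to its own nearest separable state rather than to conditionals derived from $\sigma$---which is exactly how the double counting you worry about is avoided.
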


Eq.~(\ref{eq:Er-monogamy-b}) is obtained from Eq.~(\ref{eq:Er-monogamy-a}) by regularizing both
sizes, and it becomes stronger due to the subadditivity of $E_r$ and superadditivity of
$E_{r,\ONELOCC}$~\cite{Voll-Werner01, Piani09}.

It is worth mentioning that Eq.~(\ref{eq:Er-monogamy-a}) and Eq.~(\ref{eq:Er-monogamy-b}) are
in the form similar to Piani's superadditivity-like relation
\[
  E_r(\rho_{A_1A_2:B_1B_2}) \geq E_{r,\mathsf{M}}(\rho_{A_1B_1}) + E_r(\rho_{A_2B_2}),
\]
with $\mathsf{M}$ be $\LOCC$ or $\SEP$. The difference is that in our result,
there is only one single system $B$ on the left side, while it appears twice on the right side.
As a result, the price we have to pay is degrading the measurement class to $\ONELOCC$
and imposing a regularization in the two terms of the right side, respectively
(see Eq.~(\ref{eq:Er-monogamy-a})). One the other hand, our proof needs new technique
(Lemma~\ref{lemma:nondemolition-measurement} in the next section), which is derived in the
context of quantum hypothesis testing under restricted measurement class $\ONELOCC$.

\medskip\noindent
{\bf Commensurate lower bound for squashed entanglement.}
We provide in this paper a commensurate
and faithful lower bound for squashed entanglement. Instead of the one-way LOCC trace distance as
in Eq.~(\ref{eq:BCY-bound}), our result is in the form of one-way LOCC relative entropy of
entanglement, which is more natural and stronger.
\begin{theorem}
  \label{theorem:squshed-new-bound}
  For any quantum state $\rho_{AB}$, we have
  \begin{equation}
    \label{eq:new-bound}
    E_{sq}(\rho_{AB})\geq \frac{1}{2}E_{r,\ONELOCC}^\infty(\rho_{AB})
                                                  \geq \frac{1}{2}E_{r,\ONELOCC}(\rho_{AB}).
  \end{equation}
\end{theorem}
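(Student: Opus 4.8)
The second inequality is the routine part: Piani's superadditivity of $E_{r,\ONELOCC}$~\cite{Piani09} gives $\frac1n E_{r,\ONELOCC}(\rho_{AB}^{\ox n})\ge E_{r,\ONELOCC}(\rho_{AB})$ for every $n$, so letting $n\to\infty$ yields $E_{r,\ONELOCC}^{\infty}(\rho_{AB})\ge E_{r,\ONELOCC}(\rho_{AB})$. Thus the real content is the first inequality. Since $E_{sq}(\rho_{AB})$ is the infimum of $\frac12 I(A;B|E)_\rho$ over all extensions $\rho_{ABE}$, it is enough to fix one arbitrary extension and prove the single-extension bound $I(A;B|E)_\rho\ge E_{r,\ONELOCC}^{\infty}(\rho_{AB})$.

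The plan is to prove this by exhibiting one good separable state, exploiting that $E_{r,\ONELOCC}$ is itself an \emph{infimum} over $\sep$. Applying the definition to $\rho_{AB}^{\ox n}$ together with the extension $\rho_{ABE}^{\ox n}$, whose conditional mutual information is $nI(A;B|E)$ by additivity, it suffices to produce for each $n$ a separable $\sigma_{A^nB^n}$ with $\sup_{\cM\in\ONELOCC}D\bigl(\cM(\rho_{AB}^{\ox n})\,\|\,\cM(\sigma_{A^nB^n})\bigr)\le nI(A;B|E)+o(n)$; dividing by $n$ and sending $n\to\infty$ then gives the claim. I would build $\sigma$ by measuring the conditioning system: a measurement $\{E_z\}$ on $E^n$ produces outcomes $z$ with probabilities $r_z$ and conditional states $\rho^z_{A^n},\rho^z_{B^n}$, and I set $\sigma_{A^nB^n}=\sum_z r_z\,\rho^z_{A^n}\ox\rho^z_{B^n}$. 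This is manifestly separable and has the same $A$-marginal as $\rho_{AB}^{\ox n}$.

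Next I would unfold the one-way structure. Because $\ONELOCC$ sends classical data from $A$ to $B$, every $\cM$ is Alice's measurement $\{A_x\}$ followed by a conditional channel on $B$, so data processing collapses the supremum to $\sup_{\{A_x\}}D(\rho_{XB}\|\sigma_{XB})$ on the classical-quantum states left after Alice's measurement. Since $\sigma$ and $\rho^{\ox n}$ share the same $A$-marginal, the two induced distributions of the outcome $X$ coincide, the classical contribution drops out, and the quantity reduces to $\sum_x p_x D(\rho_B^x\|\sigma_B^x)$. Joint convexity of relative entropy then bounds this by $I(X;B|Z)$, the conditional mutual information of the state in which $A$ has been measured to $X$ and $E^n$ to a classical register $Z$.

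The whole theorem therefore comes down to choosing the measurement on $E^n$ so that $\sup_{\{A_x\}}I(X;B|Z)\le nI(A;B|E)+o(n)$, and this is the hard part. The obstruction is that measuring the conditioning system can \emph{increase} conditional mutual information, so no single-copy measurement on $E$ can be expected to achieve $I(X;B|Z)\le I(A;B|E)$ — this is precisely why collective measurements on $E^n$ and the regularized quantity $E_{r,\ONELOCC}^{\infty}$ are unavoidable. Controlling the increase uniformly over all of Alice's strategies is a minimax problem that I would settle with the one-way LOCC hypothesis-testing estimate of Lemma~\ref{lemma:nondemolition-measurement}, in the spirit of the Piani-type argument already used for Theorem~\ref{theorem:Er-monogamy}. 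Finally, I note that the seemingly cheaper route through Theorem~\ref{theorem:Er-monogamy} and the identity $I(A;B|E)=I(B;AE)-I(B;E)$ does not work: bounding $I(B;AE)\ge E_r^{\infty}(\rho_{B:AE})\ge E_{r,\ONELOCC}^{\infty}(\rho_{AB})+E_r^{\infty}(\rho_{BE})$ leaves the uncompensated term $E_r^{\infty}(\rho_{BE})-I(B;E)\le 0$, which is exactly why the direct construction above is needed.
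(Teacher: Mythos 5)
The route you dismiss in your closing remark is exactly the paper's proof, and it does work. Your objection only defeats the naive chain $I(B;AE)\ge E_r^\infty(\rho_{B:AE})$, which indeed leaves the deficit $E_r^\infty(\rho_{BE})-I(B;E)\le 0$. The paper instead cites Lemma 1 of \cite{BCY10}, which is the genuinely stronger statement $I(A;B|E)_\rho \ge E_r^\infty(\rho_{B:AE})-E_r^\infty(\rho_{BE})$: the subtracted term is $E_r^\infty(\rho_{BE})$, not $I(B;E)$ (it is proved in \cite{BCY10} via the state-redistribution interpretation of conditional mutual information, not via $I(B;AE)\ge E_r$). Combined with Eq.~(\ref{eq:Er-monogamy-b}) of Theorem~\ref{theorem:Er-monogamy}, the $E_r^\infty(\rho_{BE})$ terms cancel exactly, giving $I(A;B|E)_\rho \ge E_{r,\ONELOCC}^\infty(\rho_{AB})$ for every extension, hence $2E_{sq}\ge E_{r,\ONELOCC}^\infty$; your treatment of the second inequality via Piani's superadditivity agrees with the paper.

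Your replacement argument, meanwhile, has a genuine gap at the step you yourself flag as the hard part. The reductions preceding it (equal $A$-marginals, data processing, joint convexity, yielding $\sup_{\{A_x\}} I(X;B^n|Z)$ as an upper bound) are sound, but the needed claim --- a collective measurement $\{E_z\}$ on $E^n$ with $\sup_{\{A_x\}} I(X;B^n|Z)\le nI(A;B|E)+o(n)$ --- is never proved, and Lemma~\ref{lemma:nondemolition-measurement} cannot supply it: that lemma constructs one-way LOCC instruments acting on $A^nB^n$, built from hypothesis testing between $\rho_{AB}$ and a \emph{separable} state, whose output on $B^nE^n$ is nearly undisturbed; it gives no control whatsoever over measurements on the extension system $E^n$ or over how such measurements change conditional mutual information. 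Worse, the unmeasured-Alice strengthening of your claim, $I(A^n;B^n|Z)\le nI(A;B|E)+o(n)$, is provably false in general: measuring $E^n$ produces a classical extension of $\rho_{AB}^{\ox n}$, so it would imply $E_{sq,c}^\infty\le E_{sq}$, contradicting the separation recorded in Section~\ref{sec:comparison} (via $2E_{sq,c}^\infty\ge E_r^\infty$ from Proposition~\ref{propositions:relation-Esq-Er} and the antisymmetric states with $E_r^\infty\gg E_{sq}$). Hence any proof of your minimax claim must essentially exploit the fact that Alice is also measured; that is a delicate statement which the proposal leaves entirely open, so the argument as written does not establish the theorem.
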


The core inequality for von Neumann entropy, strong subadditivity, states that for any
tripartite state $\rho_{ABE}$,
\[ I(A;B|E)_\rho \geq 0. \]
Recalling the definition of squashed entanglement, Theorem~\ref{theorem:squshed-new-bound}
implies
\[ I(A;B|E)_\rho \geq E_{r,\ONELOCC}(\rho_{AB}), \]
and hence strengthens the strong subadditivity inequality by relating it to a distance-like
entanglement measure on two of the subsystems.

To see how our result of Theorem~\ref{theorem:squshed-new-bound} improves the lower bound
proven in~\cite{BCY10}, we explain in more detail as follows. On the one hand, applying
Pinsker's inequality~\cite{Fuchs-Graaf99}, we are able to recover the trace-distance bound
of Eq.~(\ref{eq:BCY-bound}), even with a slightly better constant factor:
\begin{equation*}
    E_{sq}(\rho_{AB})\geq  \frac{1}{4\ln 2}\min_{\sigma_{AB}\in\sep}\left\|\rho_{AB}-
                                                       \sigma_{AB}\right\|^2_{\ONELOCC}.
\end{equation*}
On the other hand, while the trace-distance bound can be at most $O(1)$, our new
bound (\ref{eq:new-bound}) can be very large. Indeed, $E_{r,\ONELOCC}$ is asymptotically
normalized, in the sense of Proposition~\ref{proposition:EPR-evaluation}.

\medskip\noindent
{\bf Asymptotic continuity.}
To quantify the resources in quantum protocols in a
physically robust way, entanglement measures are expected to be asymptotically continuous.
 Piani's paper~\cite{Piani09} contains the proofs of several properties of
$E^{(\rG)}_{r,\mathsf{M}}$ for certain combination of $\rG$ and $\mathsf{M}$.
Now we also show asymptotic continuity under very general conditions.

We say that a set $\mathrm{S}$ is star-shaped with respect to some $x_0\in\mathrm{S}$,
if $px+(1-p)x_0 \in \mathrm{S}$ for all $x\in\mathrm{S}$ and $0 \leq p \leq 1$.

\begin{proposition}
  \label{proposition:asycontinuity}
  Let $\mathsf{M}$ be any set of measurements, and $\rG$ be a set of states on a quantum
  system with Hilbert space dimension $k$, containing the maximally mixed state $\tau$
  and such that in fact $\rG$ is star-shaped with respect to $\tau$.
  Let $\rho,\rho'$ be two states of the quantum system with
  $\| \rho-\rho' \|_{\mathsf{M}} \leq \epsilon \leq \frac{1}{e}$.  Then,
  \[
    \bigl| E^{(\rG)}_{r,\mathsf{M}}(\rho) - E^{(\rG)}_{r,\mathsf{M}}(\rho') \bigr|
                                  \leq 2\epsilon \log \frac{6k}{\epsilon}.
  \]
\end{proposition}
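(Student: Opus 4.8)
The plan is to prove the one-sided estimate $E^{(\rG)}_{r,\mathsf{M}}(\rho')\le E^{(\rG)}_{r,\mathsf{M}}(\rho)+2\epsilon\log\frac{6k}{\epsilon}$; the reverse bound then follows by exchanging $\rho$ and $\rho'$. I first observe that the quantity is always finite, in fact at most $\log k$, because testing against $\tau\in\rG$ gives $D\bigl(\cM(\rho)\|\cM(\tau)\bigr)\le\log k$ for every $\cM$ (since $\tr(\rho M_i)\le\tr M_i=k\,\tr(\tau M_i)$, as $\rho\le\1$), which also rules out support problems. I would fix $\sigma^*\in\rG$ that is near-optimal for $\rho$, i.e.\ $\sup_{\cM}D\bigl(\cM(\rho)\|\cM(\sigma^*)\bigr)\le E^{(\rG)}_{r,\mathsf{M}}(\rho)+\eta$, and set $\sigma':=(1-\epsilon)\sigma^*+\epsilon\tau$. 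Mixing in $\tau$ serves two purposes: star-shapedness of $\rG$ about $\tau$ guarantees $\sigma'\in\rG$, so $\sigma'$ is a legitimate competitor for $\rho'$; and the admixture forces a uniform lower bound on the measured reference. Writing $p=\cM(\rho)$, $p'=\cM(\rho')$, $q=\cM(\sigma^*)$, $t=\cM(\tau)$ and $q'=\cM(\sigma')=(1-\epsilon)q+\epsilon t$ for an arbitrary fixed $\cM\in\mathsf{M}$, the bound $t_i=\tr(\tau M_i)\ge\max(p_i,p'_i)/k$ yields $q'_i\ge\epsilon t_i\ge\epsilon\max(p_i,p'_i)/k$, hence the crucial pointwise ratio bound $p_i/q'_i,\,p'_i/q'_i\le k/\epsilon=:R$ at every outcome $i$. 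It then suffices to bound $D(p'\|q')-D(p\|q)$ uniformly in $\cM$, since taking the supremum over $\cM$ and the infimum defining $E^{(\rG)}_{r,\mathsf{M}}(\rho')$ reproduces the claim.

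I would split the difference as $\bigl(D(p'\|q')-D(p\|q')\bigr)+\bigl(D(p\|q')-D(p\|q)\bigr)$. The second bracket is the easy one: from $q'\ge(1-\epsilon)q$ one gets $D(p\|q')\le D(p\|q)+\log\frac{1}{1-\epsilon}$, an $O(\epsilon)$ contribution independent of $\cM$. The first bracket, $U:=\sum_i\bigl(p'_i\log\frac{p'_i}{q'_i}-p_i\log\frac{p_i}{q'_i}\bigr)$, is the heart of the matter. Substituting $y_i=p_i/q'_i$, $y'_i=p'_i/q'_i\in[0,R]$ and $\phi(y)=y\log y$, I rewrite $U=\sum_i q'_i\bigl(\phi(y'_i)-\phi(y_i)\bigr)$. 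Since $\phi$ is convex with $\phi'(y)=\log y+\tfrac1{\ln2}$, its modulus of continuity on $[0,R]$ satisfies $|\phi(a)-\phi(b)|\le\omega(|a-b|)$ with $\omega(\delta)=\delta\log\frac1\delta+\delta\log R+\frac{\delta}{\ln2}$, the first term capturing the singularity at $0$ and the second the largest slope near $R$. One checks $\omega'(\delta)=\log\frac{R}{\delta}\ge0$ and $\omega''<0$, so $\omega$ is increasing and concave on $[0,R]$ with $\omega(0)=0$; Jensen's inequality with the weights $q'_i$ (which sum to one) then gives $U\le\sum_i q'_i\,\omega(|y_i-y'_i|)\le\omega\bigl(\sum_i q'_i|y_i-y'_i|\bigr)=\omega(\|p-p'\|_1)\le\omega(\epsilon)$, using $\sum_i q'_i|y_i-y'_i|=\sum_i|p_i-p'_i|\le\epsilon$. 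With $R=k/\epsilon$ this is $\omega(\epsilon)=\epsilon\log k+2\epsilon\log\frac1\epsilon+\frac{\epsilon}{\ln2}$, and collecting the two brackets yields a bound of the stated form after absorbing the $O(\epsilon)$ remainders into the constant $6$ (this is where $\epsilon\le\frac1e$ is spent, ensuring $\log\frac1\epsilon$ dominates the linear terms).

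The main obstacle is precisely that $\mathsf{M}$ is arbitrary, so the measured distributions $p,p'$ live on an alphabet of unbounded size; a naive Fannes-type estimate of $H(p)-H(p')$ would introduce an uncontrolled $\log(\text{number of outcomes})$ and fail. The device that circumvents this is the rescaling by $q'_i$ followed by the concavity\,$+$\,Jensen step: the alphabet size disappears because the weights $q'_i$ \emph{average} rather than \emph{count}, and the dimension $k$ re-enters only through the uniform ratio bound $R=k/\epsilon$ supplied by the admixture of $\tau$. I would therefore regard establishing the pointwise bound $p'_i/q'_i\le k/\epsilon$ and verifying the concavity and monotonicity of the modulus $\omega$ as the two load-bearing steps; the remainder is bookkeeping of constants.
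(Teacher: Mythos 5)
Your route is genuinely different from the paper's. The paper proves a two-sided estimate for \emph{every} fixed $\sigma$ in the shrunken set $(1-\epsilon)\rG+\epsilon\tau$ and every fixed $\cM$, by normalizing the POVM as $M_i=3k\lambda_iQ_i$ with $\sum_i\lambda_i=1$, $\tr Q_i=\tfrac13$, and bounding a cross term and an entropy-like term separately (in the style of Donald and Horodecki); you instead fix a near-optimizer $\sigma^*$ for $\rho$, pass to the explicit competitor $\sigma'=(1-\epsilon)\sigma^*+\epsilon\tau$ (legitimate by star-shapedness), and run a single Jensen argument on the likelihood ratios $y_i=p_i/q_i'$, $y_i'=p_i'/q_i'$ with weights $q_i'$. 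Both proofs use the $\tau$-admixture plus concave averaging to kill any dependence on the number of outcomes, and your final bookkeeping does land below $2\epsilon\log\frac{6k}{\epsilon}$. However, one of the two steps you yourself call load-bearing contains a genuine error: $\omega(\delta)=\delta\log\frac1\delta+\delta\log R+\frac{\delta}{\ln 2}$ is \emph{not} a modulus of continuity for $\phi(y)=y\log y$ on $[0,R]$. Take $a=R$, $b=0$: then $|\phi(a)-\phi(b)|=R\log R$ while $\omega(R)=\frac{R}{\ln 2}$, so the claimed pointwise bound fails whenever $R>e$ --- and here $R=k/\epsilon\geq ek$ always. The culprit is the term $\delta\log\frac1\delta$, which is large and negative for $\delta\gg 1$. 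This objection is not vacuous, because your Jensen step needs the pointwise bound at \emph{every} outcome, and increments $|y_i-y_i'|$ of order $R$ genuinely occur: e.g.\ an outcome with $\tr(\sigma^*M_i)=0$, $\tr(\rho M_i)=0$ and $\tr(\rho'M_i)=\tr M_i$ has $y_i=0$ and $y_i'=k/\epsilon=R$.

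The good news is that the repair is local and preserves your constants. Use instead $\tilde\omega(\delta):=\delta\bigl(\log R+\tfrac{1}{\ln 2}\bigr)+\eta\bigl(\min(\delta,\tfrac1e)\bigr)$, where $\eta(t)=-t\log t$ (not to be confused with the near-optimality slack in your choice of $\sigma^*$). This \emph{is} a valid pointwise bound on $[0,R]$: for $0\leq b<a\leq R$, convexity of $\phi$ gives the increase bound $\phi(a)-\phi(b)\leq(a-b)\,\phi'(a)\leq(a-b)\bigl(\log R+\tfrac{1}{\ln 2}\bigr)$, with no $\eta$ term needed; and for the decrease, $\phi(b)-\phi(a)\leq\int_b^a\bigl(-\phi'(t)\bigr)^+\,{\rm d}t\leq\int_0^{\min(a-b,1/e)}\bigl(-\phi'(t)\bigr)\,{\rm d}t=\eta\bigl(\min(a-b,\tfrac1e)\bigr)$, because $-\phi'$ is positive only on $[0,\tfrac1e)$ and decreasing there. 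Moreover $\tilde\omega$ is nondecreasing and concave (affine plus capped $\eta$), so your Jensen step applies verbatim, and since $\epsilon\leq\tfrac1e$ the evaluation $\tilde\omega(\epsilon)=\epsilon\log R+\frac{\epsilon}{\ln 2}+\epsilon\log\frac1\epsilon$ coincides exactly with the number $\omega(\epsilon)$ you computed. The rest of your argument --- the $\log\frac{1}{1-\epsilon}$ bracket, absorbing all $O(\epsilon)$ terms into the constant $6$, taking the supremum over $\cM$ and then the infimum, and letting the slack tend to zero --- then goes through unchanged. In short: a sound and genuinely different strategy, but built on a false lemma that must be replaced by the capped modulus above.
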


\medskip\noindent
{\bf Evaluation on maximally entangled states and pure states.}
The entanglement measure
$E_{r,\mathsf{M}}$ is difficult to calculate due to the two optimizations in its definition.
Here we conduct the first exact evaluation on maximally entangled states, with $\mathsf{M}$
be any of $\{\LO, \ONELOCC, \LOCC, \SEP, \PPT \}$. The basic idea is to make use of the
symmetry of $\frac{1}{\sqrt{d}}\sum_{i=1}^d\ket{ii}$, namely, invariance under unitary
operation $U\otimes\overline{U}$. Then, with the help of asymptotic continuity of
Proposition~\ref{proposition:asycontinuity}, we further obtain their regularized versions
on general pure states.

At first glance, the restricted class of measurements $\mathsf{M}$ may make $E_{r,\mathsf{M}}$
much smaller than the normal relative entropy of entanglement. However, in our case we find
that they are almost the same when the local dimension is very large.
\begin{proposition}
\label{proposition:EPR-evaluation}
For the rank-$d$ maximally entangled state $\Phi_d$,
\begin{equation}
  \label{eq:ebit}
  \ERLO(\Phi_d) = \ERONELOCC(\Phi_d) = \ERLOCC(\Phi_d) = \ERSEP(\Phi_d)= \ERPPT(\Phi_d)
                = \log (d+1) - 1.
\end{equation}
As a corollary, this implies that for pure state $\psi_{AB}$, the regularized
versions are equal to the entropic pure state entanglement:
\begin{equation}
  \label{eq:purestates}
  E_{r,\LO}^\infty(\psi_{AB}) =
   E_{r,\ONELOCC}^\infty(\psi_{AB}) =
            E_{r,\LOCC}^\infty(\psi_{AB}) =
                  E_{r,\SEP}^\infty(\psi_{AB}) =
                       E_{r,\PPT}^\infty(\psi_{AB}) = S(\tr_B \psi).
\end{equation}
\end{proposition}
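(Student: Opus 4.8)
The plan is to establish the maximally entangled state identity~(\ref{eq:ebit}) first and then bootstrap to the pure-state identity~(\ref{eq:purestates}). Write $\Phi_d=\proj{\Phi_d}$ with $\ket{\Phi_d}=\frac{1}{\sqrt d}\sum_{i=1}^d\ket{ii}$, and record the inclusions $\LO\subseteq\ONELOCC\subseteq\LOCC\subseteq\SEP\subseteq\PPT$, which make the five quantities in~(\ref{eq:ebit}) monotonically ordered. Hence it suffices to prove one lower bound for the smallest class, $\ERLO(\Phi_d)\geq\log(d+1)-1$, and one upper bound for the largest class, $\ERPPT(\Phi_d)\leq\log(d+1)-1$. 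The separable state I would single out is the boundary isotropic state $\sigma^\ast=\frac{1}{d+1}\Phi_d+\frac{d}{d+1}\tau$ with $\tau=\frac{1}{d^2}\1$, which attains the maximal overlap $\tr(\Phi_d\sigma^\ast)=\frac1d$ allowed for a separable state.

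For the lower bound I would exploit the $U\ox\overline U$ invariance of $\Phi_d$. Since each measurement class is closed under local unitary conjugation and $\sigma\mapsto\sup_{\cM}D(\cM(\Phi_d)\|\cM(\sigma))$ is convex, the infimum over $\sigma\in\sep$ may be restricted to isotropic separable states, i.e.\ those with overlap $F:=\tr(\Phi_d\sigma)\leq\frac1d$. For such $\sigma$ I would use the local measurement $\cM_0\in\LO$ in which both parties measure in the computational basis: a direct computation gives $\cM_0(\Phi_d)$ uniform on the diagonal outcomes $(i,i)$ and $\cM_0(\sigma)$ equal to $\frac{1+Fd}{d(d+1)}$ on each diagonal outcome, so that $D(\cM_0(\Phi_d)\|\cM_0(\sigma))=\log\frac{d+1}{1+Fd}\geq\log\frac{d+1}{2}=\log(d+1)-1$, using $F\leq\frac1d$. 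This yields $\ERLO(\Phi_d)\geq\log(d+1)-1$, with equality traced by $\sigma^\ast$ and $\cM_0$.

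For the upper bound I would test $\sigma^\ast$ against an arbitrary $\PPT$ POVM $\{M_x\}$. Writing $p_x=\tr(\Phi_d M_x)$ and $t_x=\tr(\tau M_x)=\frac1{d^2}\tr M_x$, the outcome probabilities for $\sigma^\ast$ are $q_x=\frac{1}{d+1}p_x+\frac{d}{d+1}t_x$. The key is a partial-transpose estimate: since $\Phi_d^{T_B}=\frac1d W$ with $W$ the swap operator, $W\preceq\1$, and $M_x^{T_B}\succeq0$, one gets $p_x=\frac1d\tr(W M_x^{T_B})\leq\frac1d\tr M_x=d\,t_x$. Feeding $t_x\geq p_x/d$ into $q_x$ gives $q_x\geq\frac{2}{d+1}p_x$, hence $\frac{p_x}{q_x}\leq\frac{d+1}{2}$ for every outcome, and therefore $D(\cM(\Phi_d)\|\cM(\sigma^\ast))=\sum_x p_x\log\frac{p_x}{q_x}\leq\log\frac{d+1}{2}=\log(d+1)-1$. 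Taking the supremum over $\PPT$ measurements proves $\ERPPT(\Phi_d)\leq\log(d+1)-1$, and together with the previous paragraph all five quantities in~(\ref{eq:ebit}) equal $\log(d+1)-1$.

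For the pure-state identity the upper bound is immediate: measured relative entropy never exceeds the full relative entropy, so $E_{r,\mathsf M}(\psi_{AB})\leq E_r(\psi_{AB})$, and after regularization $E_{r,\mathsf M}^\infty(\psi_{AB})\leq E_r^\infty(\psi_{AB})=S(\tr_B\psi)$, the entropy of entanglement of $\psi_{AB}$. For the matching lower bound on the weakest class, $E_{r,\LO}^\infty(\psi_{AB})\geq S(\tr_B\psi)$, I would run entanglement concentration: because the Schmidt bases are perfectly correlated, both parties can locally project onto type classes and standardize the dimension without communication, giving a local map $\Lambda_n$ so that $\Lambda_n(\psi_{AB}^{\ox n})$ is $\epsilon_n$-close, in trace norm and hence in $\|\cdot\|_{\LO}$, to $\Phi_{2^{m_n}}$ with $\frac{m_n}{n}\to S(\tr_B\psi)$ and $\epsilon_n\to0$. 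Monotonicity of $E_{r,\LO}$ under local operations gives $E_{r,\LO}(\psi_{AB}^{\ox n})\geq E_{r,\LO}(\Lambda_n(\psi_{AB}^{\ox n}))$, and Proposition~\ref{proposition:asycontinuity} (applicable since $\sep$ contains $\tau$ and is star-shaped about it) together with~(\ref{eq:ebit}) bounds this below by $\log(2^{m_n}+1)-1-2\epsilon_n\log\frac{6\cdot 2^{2m_n}}{\epsilon_n}$; dividing by $n$ and sending $n\to\infty$ yields the claim. The main obstacle I anticipate is this final step: one must check that the concentration error $\epsilon_n$ vanishes while the asymptotic-continuity penalty, which scales like $\epsilon_n m_n$, stays $o(n)$, and that the whole protocol genuinely remains within $\LO$ so that even the smallest measure is controlled. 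The other delicate point is the maximally entangled state upper bound against the full $\PPT$ class, where both the correct guess of $\sigma^\ast$ and the partial-transpose inequality are essential and a weaker estimate would only give the loose bound $\log(d+1)$.
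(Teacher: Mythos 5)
Your proof is correct, and it reaches both halves of Eq.~(\ref{eq:ebit}) and both halves of Eq.~(\ref{eq:purestates}), but two of your four bounds take a genuinely different route from the paper. The lower bound $\ERLO(\Phi_d)\geq\log(d+1)-1$ is the same argument in both: a $U\ox\overline{U}$-twirl reduction to separable isotropic states (you fold the twirl into $\sigma$ via convexity of $\sigma\mapsto\sup_{\cM}D(\cM(\Phi_d)\|\cM(\sigma))$; the paper folds it into the measurement and derandomizes by joint convexity), followed by the computational-basis measurement, and your computation $D=\log\frac{d+1}{1+Fd}\geq\log\frac{d+1}{2}$ checks out. For the upper bound $\ERPPT(\Phi_d)\leq\log(d+1)-1$ you diverge: the paper twirls the POVM into isotropic form and then asserts a reduction to extremal two-outcome measurements before computing, a step it does not really justify; you instead bound the likelihood ratio outcome by outcome, using $\Phi_d^{T_B}=\frac{1}{d}W\leq\frac{1}{d}\1$ and $M_x^{T_B}\geq 0$ to get $\tr(\Phi_d M_x)\leq\frac{1}{d}\tr M_x$, hence a ratio of at most $\frac{d+1}{2}$ against $\sigma^*$ for every PPT element. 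This is more elementary and airtight, bypassing the extremality reduction entirely; what it gives up is only the explicit identification of the optimal measurement, which the proposition does not need. In the pure-state part, your lower bound (LO entanglement concentration, monotonicity of $E_{r,\LO}$ under local operations, then Proposition~\ref{proposition:asycontinuity} with Eq.~(\ref{eq:ebit}), the $O(n\epsilon_n)$ continuity penalty vanishing after division by $n$) is exactly the paper's argument, including the observation that the concentration protocol is genuinely LO. Your upper bound again differs: you use data processing, $E_{r,\mathsf{M}}(\psi^{\ox n})\leq E_r(\psi^{\ox n})=nS(\tr_B\psi)$, which handles all five classes --- including $\LO$ --- in one stroke, whereas the paper runs entanglement dilution (a one-way LOCC protocol) plus asymptotic continuity and then needs a separate remark to cover $\LO$, since dilution is not an LO protocol. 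Both routes lean on one standard external fact (the Vedral--Plenio pure-state formula for $E_r$ in your case, the dilution protocol in the paper's); yours is shorter and more uniform across the measurement classes.
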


\section{Quantum hypothesis testing under \protect\\ one-way LOCC operations with limited disturbance}
\label{sec:hypo}
In quantum hypothesis testing, we are given many copies of an information source, which
is statistically described by state $\rho$ (the null hypothesis) or $\sigma$ (the
alternative hypothesis). The task is to decide which state the source is really
in. This is achieved by doing a two outcome measurement
$\{L_n,\1-L_n\}$
on $n$ realizations of the source. We define two types of errors.
Type \uppercase\expandafter{\romannumeral1}
error is the probability that we falsely conclude that the state is $\sigma$ while it
is actually $\rho$, given by $\alpha_n(L_n):=\tr \rho^{\ox n}(\1-L_n)$;
type \uppercase\expandafter{\romannumeral2} error instead is the probability that we
mistake $\sigma$ for $\rho$, given by $\beta_n(L_n):=\tr \sigma^{\ox n}L_n$. In an
asymmetric situation, we want to minimize the type \uppercase\expandafter{\romannumeral2}
error while only simply requiring that the type \uppercase\expandafter{\romannumeral1}
error converges to $0$. The quantum Stein's lemma states that the maximal error exponent
of type \uppercase\expandafter{\romannumeral2} is the relative entropy
$D(\rho\|\sigma)$~\cite{Hiai-Petz91, Ogawa-Nagaoka00}: On the one hand, there exists
a test $\{L_n, \1-L_n\}$ satisfying
\[
  \alpha_n(L_n) \rightarrow 0 \quad\text{and}\quad
  -\frac{1}{n}\log\beta_n(L_n) \rightarrow D(\rho\|\sigma).
\]
On the other hand, if a test $\{L_n, \1-L_n\}$ is such that
\[
  \liminf_{n\rar\infty} -\frac{1}{n}\log\beta_n(L_n) > D(\rho\|\sigma),
\]
then $\alpha_n(L_n) \rightarrow 1$. This also applies to the classical
setting, if we replace quantum states $\rho$ and $\sigma$ by classical probability
distributions and the quantum measurement by a classical decision function~\cite{Cover-Thomas91}.

When $\rho$ and $\sigma$ are compound quantum states, it is natural to put locality
constraints on the measurements $\{L_n, \1-L_n\}$. In this case, the problem of
quantum hypothesis testing becomes much more difficult, and solutions are
known only in some specific situations~\cite{Matthews-Winter07, Owari-Hayashi11}.
Here, we focus on the family of measurements which are implementable by means of local
operations and one-way classical communication (one-way LOCC).
Our goal is not to derive a single-letter formula for the optimal error exponent; instead,
we are interested in how the disturbance on the quantum states induced by the measurement
is limited, when certain error exponent of type \uppercase\expandafter{\romannumeral2}
is achieved.

Let the null hypothesis and alternative hypothesis be $\rho_{ABE}^{\ox n}$ and
$\sigma_{ABE}^{\ox n}$, respectively. Let the allowed operations be restricted to
one-way LOCC which is performed on systems $A^n$ and $B^n$, with classical communication
from Alice's side ($A^n$) to Bob's side ($B^n$). On the one hand, it is easy to see that, for any one-way LOCC
measurement $\cM^{AB\rar X}$, $D\left(\cM(\rho)\|\cM(\sigma)\right)$ is an achievable
error exponent of type \uppercase\expandafter{\romannumeral2}. This is because, after
doing the measurement $\cM$ on each copy of the quantum states, the two states $\rho^{\ox n}$
and $\sigma^{\ox n}$ are replaced by classical probability distributions $(\cM(\rho))^{\ox n}$
and $(\cM(\sigma))^{\ox n}$. Then applying the Stein's lemma in the classical setting,
we know that there exists a classical decision rule which can achieve the above-mentioned
error figure. Hence, the corresponding quantum measurement $\{L_n, \1-L_n\}$ can be
constructed from $\cM^{\ox n}$ and this decision rule.
On the other hand, when the two
kinds of errors are sufficiently small, the one-way LOCC test $\{L_n, \1-L_n\}$ can be
performed in such a way that the reduced states on system $B^nE^n$, $\rho_{BE}^{\ox n}$
and $\sigma_{BE}^{\ox n}$, are kept almost undisturbed. This is a consequence of the
``gentle measurement lemma''~\cite{Winter98}. Note that, generally speaking, the full
states $\rho_{ABE}^{\ox n}$ and $\sigma_{ABE}^{\ox n}$ will be inevitably disturbed
significantly by the measurement,
because in the one-way  LOCC procedure, Bob's choice of measurement
is based on the outcome of Alice's measurement, and the extracting of such classical
information generically has to damage the states at Alice's side.
%

\begin{lemma}
  \label{lemma:nondemolition-measurement}
  For any two states $\rho_{ABE}$ and $\sigma_{ABE}$, and any one-way LOCC measurement
  $\cM^{AB\rar X}$ acting on system $AB$, with classical communication from $A$ to $B$,
  there exists a sequence of quantum instruments $\cT_n^{A^nB^n\rightarrow XB^n}$, which
  are implementable via local operations and one-way
  classical communication from $A^n$ to $B^n$, such that
  \begin{align}
    \label{eq:D-conserve}
    \lim_{n\rightarrow\infty} \frac{1}{n}
       D\bigl( \cT_n^c(\rho_{AB}^{\otimes n}) \| \cT_n^c(\sigma_{AB}^{\otimes n}) \bigr)
                                               &= D\bigl( \cM(\rho_{AB})\|\cM(\sigma_{AB}) \bigr), \\
    \label{eq:state-undisturb}
    \lim_{n\rightarrow\infty}
      \left\| \cT_n^q \otimes \id^{E^n}(\rho_{ABE}^{\otimes n})
             - \rho_{BE}^{\otimes n}\right\|_1 &= 0,
  \end{align}
  where $\cT_n^c:=\tr_{B^n}\circ \cT_n^{A^nB^n\rar XB^n}$, and
  $\cT_n^q:=\tr_X\circ\cT_n^{A^nB^n\rar XB^n}$.
\end{lemma}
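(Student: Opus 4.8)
The plan is to realize $\cT_n$ as a one-way LOCC instrument whose classical output $X$ is merely the binary outcome of an asymptotically optimal hypothesis test built out of $\cM^{\otimes n}$, carried out \emph{gently} on Bob's side so that $B^n$ survives. First I would fix the one-way LOCC structure of $\cM$: Alice applies a POVM $\{P_u^A\}$, sends the outcome $u$ to Bob, who applies a conditional POVM $\{Q_{v|u}^B\}$, so that $\cM(\rho_{AB})=\sum_{u,v}\proj{uv}\tr\bigl(\rho_{AB}(P_u^A\otimes Q_{v|u}^B)\bigr)$. Writing $p:=\cM(\rho_{AB})$, $q:=\cM(\sigma_{AB})$ and $d:=D(p\|q)$, the classical Stein's lemma furnishes acceptance regions $T_n\subseteq X^n$ with type-I error $\alpha_n:=1-p^{\otimes n}(T_n)\to 0$ and type-II exponent $-\frac1n\log q^{\otimes n}(T_n)\to d$.

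Next I would turn $T_n$ into a genuine one-way LOCC binary measurement that leaves $B^n$ intact. On each block, Alice measures $A^n$ with $\{P_{u^n}^{A^n}\}$ and transmits the string $u^n$; Bob then performs on $B^n$ the binary measurement $\{L_{u^n},\1-L_{u^n}\}$ with $L_{u^n}:=\sum_{v^n:(u^n,v^n)\in T_n}Q_{v^n|u^n}^{B^n}$, which is a legitimate effect since $0\le L_{u^n}\le\1$, recording only the bit $X=\1[(u^n,v^n)\in T_n]$ and retaining the post-measurement $B^n$. This defines $\cT_n^{A^nB^n\to XB^n}$, manifestly one-way LOCC from $A^n$ to $B^n$. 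Its global ``accept'' POVM element is $\Lambda_n=\sum_{u^n}P_{u^n}^{A^n}\otimes L_{u^n}^{B^n}$, and a short computation gives $\tr(\rho_{AB}^{\otimes n}\Lambda_n)=p^{\otimes n}(T_n)=1-\alpha_n$ and $\tr(\sigma_{AB}^{\otimes n}\Lambda_n)=q^{\otimes n}(T_n)$.

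For \eqref{eq:D-conserve}, $\cT_n^c(\rho_{AB}^{\otimes n})$ is simply the distribution $(1-\alpha_n,\alpha_n)$ and $\cT_n^c(\sigma_{AB}^{\otimes n})=(q^{\otimes n}(T_n),\,1-q^{\otimes n}(T_n))$. Since this bit is a deterministic function of the full outcome of $\cM^{\otimes n}$, data processing gives $\frac1n D\bigl(\cT_n^c(\rho^{\otimes n})\|\cT_n^c(\sigma^{\otimes n})\bigr)\le\frac1n D(p^{\otimes n}\|q^{\otimes n})=d$; evaluating the binary relative entropy directly, the dominant term $\frac1n(1-\alpha_n)\log\frac{1-\alpha_n}{q^{\otimes n}(T_n)}$ tends to $d$ (using $\alpha_n\to0$ and $-\frac1n\log q^{\otimes n}(T_n)\to d$) while the remaining term vanishes, so the limit equals $d$ exactly.

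The crux is \eqref{eq:state-undisturb}, which I would establish via the gentle measurement lemma. Tracing out the bit $X$ and Alice's system, the output on $B^nE^n$ is $\sum_{u^n}\Phi_{u^n}\otimes\id^{E^n}(\tilde\rho_{u^n})$, where $\tilde\rho_{u^n}$ is the subnormalized conditional state on $B^nE^n$ after Alice's outcome $u^n$ and $\Phi_{u^n}$ is the non-selective channel of $\{L_{u^n},\1-L_{u^n}\}$. Two facts drive the estimate: averaging over Alice's outcomes restores the marginal, $\sum_{u^n}\tilde\rho_{u^n}=\rho_{BE}^{\otimes n}$ (discarding a measured subsystem is a channel on $A^n$ alone); and the averaged conditional type-I error equals the global one, $\sum_{u^n}\bigl(\tr\tilde\rho_{u^n}-\tr(L_{u^n}\tilde\rho_{u^n}^{B^n})\bigr)=1-\tr(\rho^{\otimes n}\Lambda_n)=\alpha_n$. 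Applying the non-selective gentle measurement lemma to each normalized conditional state and summing, the triangle inequality together with concavity of the square root (Jensen) yields $\bigl\|\cT_n^q\otimes\id^{E^n}(\rho_{ABE}^{\otimes n})-\rho_{BE}^{\otimes n}\bigr\|_1\le 2\sqrt{\alpha_n}+\alpha_n\to0$. I expect the main obstacle to be exactly the organization of this step: one must absorb the $u^n$-dependence of Bob's measurement by passing to conditional states and show that \emph{collective} gentleness follows from the single global type-I error $\alpha_n$, rather than from any per-outcome control.
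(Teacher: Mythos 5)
Your construction is correct and is essentially the paper's own proof: the same application of the classical Stein's lemma to the outcome distributions of $\cM^{\otimes n}$, the same one-way LOCC instrument in which Alice measures and communicates her outcome while Bob coherently implements the accept/reject test with square-root Kraus operators and records only the binary verdict in $X$, and the same evaluation of Eq.~(\ref{eq:D-conserve}) via the binary distributions $(1-\alpha_n,\alpha_n)$ versus $(\beta_n,1-\beta_n)$. The only difference is organizational, in the step you flagged as the crux: you apply the gentle measurement lemma to each conditional state given Alice's outcome and recombine via Jensen/Cauchy--Schwarz, whereas the paper stores Alice's outcome in a classical register $K^n$ and applies the gentle measurement lemma once to the single block-diagonal operator $\Lambda=\sum_{k^n}\proj{k^n}\otimes Q_{k^n,\text{Null}}^2$ on the extended state $\tilde{\rho}_{K^nB^nE^n}$ --- both routes give exactly the bound $\alpha_n+2\sqrt{\alpha_n}\rightarrow 0$.
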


\begin{proof}
Let the POVM elements of the measurement $\cM$ be
$\{R_k^A\ox S_{k,\ell}^B\}_{k,\ell}$, with $\sum_k R_k=\1^A$ and
$\sum_{\ell}S_{k,\ell}=\1^B$ for all $k$. Operationally, this means
that Alice does a measurement $\{R_k\}_k$ on system A, then she tells Bob the outcome
$k$, and according to what he receives, Bob does a measurement $\{S_{k,\ell}\}_{\ell}$
on the system B. For $\cM^{\otimes n}$ acting on $A^nB^n$, we
denote the measurement outcomes
$(k_1 k_2\ldots k_n, \ell_1 \ell_2 \ldots \ell_n) =: (k^n, \ell^n)$,
and the corresponding measurement elements
$\bigotimes_{i=1}^n (R_{k_i}^{A_i} \otimes S_{k_i,\ell_i}^{B_i}) =: R_{k^n}\ox S_{k^n, \ell^n}$.

For the problem of quantum hypothesis testing with the null hypothesis
$\rho_{ABE}^{\otimes n}$ and the alternative hypothesis $\sigma_{ABE}^{\otimes n}$,
and the permitted operations be one-way LOCC on parties $A^n$ and $B^n$, we consider
the protocol as follows. First, we apply the measurement $\cM$ to each copy of the
states $\rho$ and $\sigma$, resulting in classical probability distributions
$\cM^{\otimes n}(\rho^{\otimes n})$ and $\cM^{\otimes n}(\sigma^{\otimes n})$. Then,
we partition the set $\{(k^n, \ell^n)\}$ of all measurement
outcomes into two disjoint subsets $\cO_{n,\text{Null}}$ and $\cO_{n,\text{Alt}}$,
and make a classical decision: if the measurement outcome is in $\cO_{n,\text{Null}}$,
we infer that the state is $\rho^{\otimes n}$ (null hypothesis);
otherwise, it belongs to $\cO_{n,\text{Alt}}$ and we conclude that the state is
 $\sigma^{\otimes n}$ (alternative hypothesis). In such a protocol, the two
types of errors are
\begin{align}
  \alpha_n &= \sum_{(k^n, \ell^n)\in\cO_{n,\text{Alt}}}
                 \tr\rho_{AB}^{\otimes n} (R_{k^n}\ox S_{k^n,\ell^n}), \label{eq:null-infer-p} \\
  \beta_n  &= \sum_{(k^n, \ell^n)\in\cO_{n,\text{Null}}}
                 \tr\sigma_{AB}^{\otimes n} (R_{k^n}\ox S_{k^n,\ell^n}). \label{eq:alternative-infer-q}
\end{align}
By the classical Stein's lemma~\cite{Cover-Thomas91}, there exists a partition
$\{(k^n, \ell^n)\} = \cO_{n,\text{Null}} \stackrel{.}{\cup}\cO_{n,\text{Alt}}$
such that
\begin{align}
  \lim_{n\rar\infty}\alpha_n                &= 0, \label{eq:type-1-error} \\
  \lim_{n\rar\infty}-\frac{1}{n}\log\beta_n &= D(\cM(\rho)\|\cM(\sigma)), \label{eq:type-2-error}
\end{align}
which leads to
\begin{equation}
  \lim_{n\rightarrow\infty}\frac{1}{n}D(\{1-\alpha_n,\alpha_n\}\|\{\beta_n,1-\beta_n\})
                                          =D(\cM(\rho)\|\cM(\sigma)). \label{eq:stein-lemma}
\end{equation}

From now on, we fix such a partition of $\{(k^n,\ell^n)\}$ into
$\cO_{n,\text{Null}}$ and $\cO_{n,\text{Alt}}$. Let
\begin{equation}
  Q_{k^n,x}:=\sqrt{ \sum_{\ell^n:(k^n,\ell^n)\in\cO_{n,x}} S_{k^n,\ell^n}}, \label{eq:measurement-reform}
\end{equation}
where the index $x$ can be ``Null'' or ``Alt''. It is obvious that
$\{Q_{k^n,\text{Null}}, Q_{k^n,\text{Alt}}\}$ forms a complete set of Kraus operators,
i.e.~$Q_{k^n,\text{Null}}^\dagger Q_{k^n,\text{Null}}+Q_{k^n,\text{Alt}}^\dagger Q_{k^n,\text{Alt}}=\1^{B^n}$.
We are now ready for the definition of quantum instrument $\cT_n^{A^nB^n\rightarrow XB^n}$:
\begin{equation}
  \cT_n(\omega_{A^nB^n})
     :=\sum_{x=\text{Null},\text{Alt}}
         \proj{x}^X \otimes
         \sum_{k^n} \tr_{A^n} \bigl(\sqrt{R_{k^n}} \ox Q_{k^n,x}\bigr)
                                   \omega_{A^nB^n}
                              \bigl(\sqrt{R_{k^n}} \ox Q_{k^n,x}\bigr). \label{eq:instrument-definition}
\end{equation}

To complete the proof, we will demonstrate that $\cT_n$ satisfies all the requirements as
advertised. First, it is obvious that $\cT_n$ can be realized by means of one-way LOCC.
Alice does a measurement $\{R_{k^n}\}$ on the system $A^n$, then she communicates the outcome
$k^n$ to Bob; upon receiving $k^n$, Bob does a two-outcome measurement with Kraus operators
$\{Q_{k^n,\text{Null}},Q_{k^n,\text{Alt}}\}$ on the system $B^n$, at the same time he stores
the measurement results ``Null'' or ``Alt'' in the classical register $X$.

Secondly, we verify Eq.~(\ref{eq:D-conserve}). Clearly, we can write
\begin{equation}
  \cT_n\otimes\id^{E^n}(\rho_{ABE}^{\otimes n})
      = \sum_x \proj{x}^X \otimes \tilde{\rho}_{B^nE^n}^x, \label{eq:instrument-rho}
\end{equation}
with
\begin{equation}
  \label{eq:instrument-rho-ensemble}
  \tilde{\rho}_{B^nE^n}^x
      = \sum_{k^n} \tr_{A^n} \bigl(\sqrt{R_{k^n}} \ox Q_{k^n,x}\ox \1^{E^n}\bigr)
                                     \rho_{ABE}^{\otimes n}
                             \bigl(\sqrt{R_{k^n}} \ox Q_{k^n,x}\ox \1^{E^n}\bigr).
\end{equation}
Eqs.~(\ref{eq:null-infer-p}), (\ref{eq:measurement-reform}) and (\ref{eq:instrument-rho-ensemble})
together guarantee that
\begin{equation}
  \label{eq:prob-coincide}
  \tr\tilde{\rho}_{B^nE^n}^\text{Alt}  = \alpha_n \quad\text{and}\quad
  \tr\tilde{\rho}_{B^nE^n}^\text{Null} = 1-\alpha_n,
\end{equation}
which together with Eq.~(\ref{eq:instrument-rho}) results in
\begin{equation}
  \label{eq:prob-coincide-p}
  \cT_n^c(\rho_{AB}^{\otimes n})=(1-\alpha_n)\proj{\text{Null}}^X+\alpha_n\proj{\text{Alt}}^X.
\end{equation}
Similarly, from Eqs.~(\ref{eq:alternative-infer-q}), (\ref{eq:measurement-reform})
and (\ref{eq:instrument-definition}), we derive that
\begin{equation}
  \label{eq:prob-coincide-q}
  \cT_n^c(\sigma_{AB}^{\otimes n})=\beta_n\proj{\text{Null}}^X+(1-\beta_n)\proj{\text{Alt}}^X.
\end{equation}
So, Eqs.~(\ref{eq:stein-lemma}), (\ref{eq:prob-coincide-p}) and (\ref{eq:prob-coincide-q})
imply
\begin{equation*}
  \lim_{n\rightarrow\infty} \frac{1}{n}D(\cT_n^c(\rho_{AB}^{\otimes n})\|
                        \cT_n^c(\sigma_{AB}^{\otimes n}))=D(\cM(\rho_{AB})\|\cM(\sigma_{AB})),
\end{equation*}
which is exactly Eq.~(\ref{eq:D-conserve}).

Finally, we prove that $\cT_n$ satisfies Eq.~(\ref{eq:state-undisturb}). Making use of
Eqs.~(\ref{eq:instrument-rho}) and (\ref{eq:prob-coincide}), we have
\begin{equation}\begin{split}
  \label{eq:state-undisturb-1}
  \left\| \cT_n^q \otimes \id^{E^n}(\rho_{ABE}^{\otimes n})-\rho_{BE}^{\otimes n}\right\|_1
      &= \left\| \tilde{\rho}_{B^nE^n}^\text{Alt}+ \tilde{\rho}_{B^nE^n}^\text{Null}-\rho_{BE}
                                                                ^{\otimes n}\right\|_1     \\
      &\leq \alpha_n+\left\|  \tilde{\rho}_{B^nE^n}^\text{Null}-\rho_{BE}^{\otimes n}\right\|_1.
\end{split}\end{equation}
Paying attention to the definition of $\tilde{\rho}_{B^nE^n}^\text{Null}$, namely
Eq.~(\ref{eq:instrument-rho-ensemble}), we easily check that
\begin{equation}
  \label{eq:state-undisturb-2}
  \tilde{\rho}_{B^nE^n}^\text{Null}=\tr_{K^n}\sqrt{\Lambda}\tilde{\rho}_{K^nB^nE^n}\sqrt{\Lambda},
\end{equation}
where $\tilde{\rho}_{K^nB^nE^n}:=\tr_{A^n}\sum_{k^n}\proj{k^n}^{K^n}\otimes (\sqrt{R_{k^n}}
\rho_{ABE}^{\otimes n}\sqrt{R_{k^n}})$ is a normalized quantum state, and $\Lambda:=\sum_{k^n}
\proj{k^n}^{K^n}\otimes Q_{k^n,\text{Null}}^2$ is a POVM element satisfying $0\leq\Lambda\leq\1$.
As a result,
\begin{equation}\begin{split}
  \label{eq:state-undisturb-3}
  \left\|  \tilde{\rho}_{B^nE^n}^\text{Null}-\rho_{BE}^{\otimes n}\right\|_1
       &=    \left\|\tr_{K^n}\sqrt{\Lambda}\tilde{\rho}_{K^nB^nE^n}\sqrt{\Lambda}-
                                               \tr_{K^n}\tilde{\rho}_{K^nB^nE^n}\right\|_1 \\
       &\leq \left\|\sqrt{\Lambda}\tilde{\rho}_{K^nB^nE^n}\sqrt{\Lambda}-
                                                        \tilde{\rho}_{K^nB^nE^n}\right\|_1 \\
       &\leq 2\sqrt{1-\tr \tilde{\rho}_{K^nB^nE^n} \Lambda} \\
       &=    2\sqrt{\alpha_n},
\end{split}\end{equation}
where the first line is by Eq.~(\ref{eq:state-undisturb-2}) and the fact that
$\tr_{K^n}\tilde{\rho}_{K^nB^nE^n}=\rho_{BE}^{\otimes n}$, the second line is because of
the monotonicity of trace distance under partial trace, the third line makes use of the
gentle measurement lemma~\cite{Winter98}, and the last line follows from
Eqs.~(\ref{eq:prob-coincide}) and (\ref{eq:state-undisturb-2}). Eventually, inserting
Eq.~(\ref{eq:state-undisturb-3}) into Eq.~(\ref{eq:state-undisturb-1}), and invoking
Eq.~(\ref{eq:type-1-error}), we arrive at
\begin{equation*}
  \left\| \cT_n^q \otimes \id^{E^n}(\rho_{ABE}^{\otimes n})-
                                                      \rho_{BE}^{\otimes n}\right\|_1
  \leq \alpha_n+2\sqrt{\alpha_n} \rightarrow 0,
\end{equation*}
which is precisely Eq.~(\ref{eq:state-undisturb}).
\end{proof}

\section{Entanglement monogamy relation and \protect\\
         commensurate lower bound for squashed entanglement}
\label{sec:proof-of-monogamy-newbound}
\begin{proofof}[of Theorem~\ref{theorem:Er-monogamy}]
As discussed in Section~\ref{sec:main results}, it suffices to prove Eq.~(\ref{eq:Er-monogamy-a}).
Let $\sigma_{B:AE}$ be the nearest separable state to $\rho_{B:AE}$ with respect to
the measure of relative entropy. That is to say,
\begin{equation}
\label{eq:Er-monogamy-1}
E_r(\rho_{B:AE})=D(\rho_{ABE}\|\sigma_{ABE})
                =\frac{1}{n} D(\rho_{ABE}^{\otimes n}\|\sigma_{ABE}^{\otimes n}).
\end{equation}
Let $\cM^{AB\rar X}$ be an arbitrary one-way LOCC measurement. Applying
Lemma~\ref{lemma:nondemolition-measurement} to $\rho_{ABE}$, $\sigma_{ABE}$ and
$\cM^{AB\rar X}$, we know that there exists a sequence of quantum instruments
$\cT_n^{A^nB^n\rar XB^n}$, which are implementable via local operations and classical
communication from $A^n$ to $B^n$, such that
\begin{align}
  \lim_{n\rightarrow\infty} \frac{1}{n}D(\cT_n^c(\rho_{AB}^{\otimes n})\|\cT_n^c(\sigma_{AB}
               ^{\otimes n})) &= D(\cM(\rho_{AB})\|\cM(\sigma_{AB})),   \label{eq:Er-monogamy-2}\\
  \lim_{n\rightarrow\infty}\left\| \cT_n^q \otimes \id^{E^n}(\rho_{ABE}^{\otimes n})-
                             \rho_{BE}^{\otimes n}\right\|_1 &= 0, \label{eq:Er-monogamy-3}
\end{align}
where $\cT_n^c:=\tr_{B^n}\circ \cT_n^{A^nB^n\rar XB^n}$, and $\cT_n^q:=\tr_X\circ
\cT_n^{A^nB^n\rar XB^n}$. Write $\cT_n\ox \id^{E^n} (\rho_{ABE}^{\otimes n})=\sum_{i_n}p_{i_n}
\proj{i_n}^X\otimes\rho_{B^nE^n}^{i_n}$ and $\cT_n\ox \id^{E^n}(\sigma_{ABE}^{\otimes n})=
\sum_{i_n}q_{i_n}\proj{i_n}^X\otimes\sigma_{B^nE^n}^{i_n}$. It is easy to check that
\begin{equation}\begin{split}
  \label{eq:Er-monogamy-4}
  D(\cT_n\ox\id^{E^n}(\rho_{ABE}^{\otimes n})\|\cT_n\ox\id^{E^n}(\sigma_{ABE}^{\otimes n}))
    &= D(\cT_n^c(\rho_{AB}^{\otimes n})\|\cT_n^c(\sigma_{AB}^{\otimes n}))+
                                \sum_{i_n}p_{i_n}D(\rho_{B^nE^n}^{i_n}\|\sigma_{B^nE^n}^{i_n}) \\
    &\!\!\!\!\!\!\!\!\!\!\!\!\!\!\!\!\!\!\!\!\!\!\!\!\!\!\!\!\!\!\!\!\!\!\!
     \geq D(\cT_n^c(\rho_{AB}^{\otimes n})\|\cT_n^c(\sigma_{AB}^{\otimes n}))+D(\cT_n^q\ox\id^{E^n}
                              (\rho_{ABE}^{\otimes n})\|\sum_{i_n}p_{i_n}\sigma_{B^nE^n}^{i_n})\\
    &\!\!\!\!\!\!\!\!\!\!\!\!\!\!\!\!\!\!\!\!\!\!\!\!\!\!\!\!\!\!\!\!\!\!\!
     \geq D(\cT_n^c(\rho_{AB}^{\otimes n})\|\cT_n^c(\sigma_{AB}^{\otimes n}))
                                     +E_r\bigl(\cT_n^q\ox\id^{E^n}(\rho_{ABE}^{\otimes n})\bigr),
\end{split}\end{equation}
where the first line is by direct calculation, the second line follows from convexity
of quantum relative entropy, and for the last line, note that the state $\sum_{i_n}p_{i_n}
\sigma_{B^n:E^n}^{i_n}$ is still separable because of the LOCC feature of $\cT_n$. By the
Lindblad-Uhlmann theorem~\cite{Lindblad75,Uhlmann77}, quantum relative entropy is
monotonic under cptp quantum operations. So, combining Eqs.~(\ref{eq:Er-monogamy-1})
and (\ref{eq:Er-monogamy-4}) results in
\begin{equation}
\label{eq:Er-monogamy-5}
E_r(\rho_{B:AE})\geq \frac{1}{n}D(\cT_n^c(\rho_{AB}^{\otimes n})\|\cT_n^c(\sigma_{AB}
         ^{\otimes n}))+\frac{1}{n}E_r\bigl(\cT_n^q\ox\id^{E^n}(\rho_{ABE}^{\otimes n})\bigr).
\end{equation}
It was proven in~\cite{Donald-Horodecki99} that the relative entropy of entanglement satisfies
a strong continuity condition: for two states $\rho_1$ and $\rho_2$ on system $AB$ with
$\|\rho_1-\rho_2 \|_1 \leq \frac{1}{e}$, we have
\begin{equation}
\label{eq:Er-monogamy-6}
  |E_r(\rho_1)-E_r(\rho_2)| \leq 2(2+\log |A|+\log|B|) \|\rho_1-\rho_2 \|_1
                                 + 2\eta(\|\rho_1-\rho_2 \|_1),
\end{equation}
where $\eta(x)=-x\log x$. Now, letting $n\rar\infty$ in Eq.~(\ref{eq:Er-monogamy-5}),
and then making use of Eqs.~(\ref{eq:Er-monogamy-2}), (\ref{eq:Er-monogamy-3}) and
(\ref{eq:Er-monogamy-6}), we obtain
\begin{equation}\begin{split}
  \label{eq:Er-monogamy-7}
  E_r(\rho_{B:AE})&\geq \lim_{n\rar\infty}\frac{1}{n}D(\cT_n^c(\rho_{AB}^{\otimes n})
                     \|\cT_n^c(\sigma_{AB}^{\otimes n}))+\lim_{n\rar\infty}\frac{1}{n}
                     E_r\bigl(\cT_n^q\ox\id^{E^n}(\rho_{ABE}^{\otimes n})\bigr) \\
                  & = D(\cM(\rho_{AB})\|\cM(\sigma_{AB})) +  E_r^\infty(\rho_{BE}).
\end{split}\end{equation}
Since $\cM$ is arbitrary, it follows from Eq.~(\ref{eq:Er-monogamy-7}) that
\begin{equation}
  \label{eq:Er-monogamy-8}
  E_r(\rho_{B:AE}) \geq  \sup_{\cM\in\ONELOCC} D(\cM(\rho_{AB})\|\cM(\sigma_{AB}))
                  +E_r^\infty(\rho_{BE})\geq E_{r,\ONELOCC}(\rho_{AB})+E_r^\infty(\rho_{BE}),
\end{equation}
where the second inequality is by the definition of $E_{r,\ONELOCC}$, and we are done.
\end{proofof}

\medskip
\begin{proofof}[of Theorem~\ref{theorem:squshed-new-bound}]
It is shown in~\cite[Lemma 1]{BCY10} that
\begin{equation}
  \label{eq:proof-new-bound-1}
  I(A;B|E)_\rho \geq E_r^\infty(\rho_{B:AE})-E_r^\infty(\rho_{BE}).
\end{equation}
Eq.~(\ref{eq:Er-monogamy-b}) in Theorem~\ref{theorem:Er-monogamy}, together with
Eq.~(\ref{eq:proof-new-bound-1}), gives us
\begin{equation}
  \label{eq:proof-new-bound-2}
  I(A;B|E)_\rho \geq E_{r,\ONELOCC}^\infty(\rho_{AB}).
\end{equation}
Then, recalling the definition of squashed entanglement and by the superadditivity of
$\ERONELOCC$~\cite{Piani09}, we arrive at
\begin{equation*}
         E_{sq}(\rho_{AB})\geq \frac{1}{2}E_{r,\ONELOCC}^\infty(\rho_{AB})
    \geq \frac{1}{2}E_{r,\ONELOCC}(\rho_{AB}),
\end{equation*}
which concludes the proof.
\end{proofof}

\section{Asymptotic continuity}
\label{sec:proof-asycontinuity}
\begin{proofof}[of Proposition~\ref{proposition:asycontinuity}]
  For $0 \leq x \leq 1$, let $\rG_x := x\rG + (1-x)\tau$, so that $\rG_1 = \rG$ and
  $\rG_0 = \tau$. We follow very closely~\cite{Donald-Horodecki99}, and start by the
  observation that because of $\rG_x \subset \rG$ and the operator monotonicity of the
  $\log$ function,
  \begin{equation}
    \label{eq:D-Dx}
    E^{(\rG)}_{r,\mathsf{M}} \leq E^{(\rG_x)}_{r,\mathsf{M}} \leq E^{(\rG)}_{r,\mathsf{M}} - \log x.
  \end{equation}
  We will later see that $x=1-\epsilon$ is a good choice.
  However, it is clear already that if it is
  close to $1$, then we reduce our problem to proving
  asymptotic continuity for $\rG_x$, which has the
  property that all of its elements are of full rank. In fact, the smallest
  eigenvalue of a $\sigma \in \rG_x$ is $\geq \frac{1-x}{k}$.

  Now fix $\sigma \in \rG_x$ and $\cM \in \mathsf{M}$, and consider
  \[
    E^{(\sigma)}_{r,\{\cM\}}(\rho) = D\bigl( \cM(\rho) \| \cM(\sigma) \bigr)
                                 = \sum_i \tr \rho M_i \log \frac{\tr \rho M_i}{\tr \sigma M_i}.
  \]
  Since $0\leq M_i \leq \1$, we can write $M_i = 3k\lambda_i Q_i$ with
  operators $Q_i \geq 0$ s.t.~$\tr Q_i = \frac13$,
  and $\lambda_i \geq 0$, $\sum_i \lambda_i = 1$. Then,
  $\frac{1}{3} \geq \tr \sigma Q_i \geq \frac{1-x}{3k}$ for all $i$. We can also rewrite
  the above quantity as
  \[\begin{split}
    E^{(\sigma)}_{r,\{\cM\}}(\rho)
                   &= 3k \sum_i \lambda_i \tr \rho Q_i \log \frac{\tr \rho Q_i}{\tr \sigma Q_i} \\
                   &= - \sum_i \tr \rho M_i \log \tr \sigma Q_i +
                        3k \sum_i \lambda_i \tr \rho Q_i \log \tr \rho Q_i,
  \end{split}\]
  and we will treat the two latter sums separately; call them $\text{I}(\rho)$ and
  $\text{II}(\rho)$, respectively.
  For the first one,
  \[\begin{split}
    \bigl| \text{I}(\rho) - \text{I}(\rho') \bigr|
               &=    \left| \sum_i \tr(\rho-\rho')M_i \log \tr\sigma Q_i \right| \\
               &\leq \sum_i \log \frac{3k}{1-x} \bigl| \tr\rho M_i - \tr\rho' M_i \bigr| \\
               &=    \log \frac{3k}{1-x} \| \rho-\rho' \|_{\{\cM\}}
                \leq \epsilon \log\frac{3k}{\epsilon}.
  \end{split}\]
  For the second term, we use the function $\eta(t) = -t\log t$, which is
  concave, non-negative on the unit interval and has the elementary property that
  for all $s,t \geq 0$, $\eta(s+t) \leq \eta(s)+\eta(t)$. Furthermore, for
  $0\leq t \leq \frac{1}{e}$ is is monotonically increasing. Now,
  $\text{II}(\rho) = -3k \sum_i \lambda_i \eta(\tr\rho Q_i)$, and so
  \[\begin{split}
    \bigl| \text{II}(\rho) - \text{II}(\rho') \bigr|
               &\leq 3k \sum_i \lambda_i \bigl| \eta(\tr\rho Q_i) - \eta(\tr\rho' Q_i) \bigr| \\
               &\leq 3k \sum_i \lambda_i \eta\bigl( |\tr(\rho-\rho')Q_i| \bigr) \\
               &\leq 3k \eta\left( \sum_i \lambda_i |\tr(\rho-\rho')Q_i| \right) \\
               &=    3k \eta\left( \frac{1}{3k} \|\rho-\rho'\|_{\{\cM\}} \right) \\
               &\leq 3k \eta\left( \frac{\epsilon}{3k} \right)
                =    \epsilon \log\frac{3k}{\epsilon}.
  \end{split}\]
  where in the third line we have used the concavity of $\eta$.

  Putting these two observations together, we obtain (recall $\sigma \in \rG_x$,
  $x=1-\epsilon$)
  \begin{equation}
    \label{eq:fixed-state-and-POVM}
    \bigl| E^{(\sigma)}_{r,\{\cM\}}(\rho) - E^{(\sigma)}_{r,\{\cM\}}(\rho') \bigr|
                                                \leq 2\epsilon \log \frac{3k}{\epsilon}.
  \end{equation}
  From this, the rest of the argument is pretty standard, all we need to
  implement is the maximization over $\cM\in\mathsf{M}$ and the minimization
  over $\sigma\in\rG_x$.
  First, fix $\sigma\in\rG_x$; then,
  \[\begin{split}
    \bigl| E^{(\sigma)}_{r,\mathsf{M}}(\rho) - E^{(\sigma)}_{r,\mathsf{M}}(\rho') \bigr|
           &= \bigl| \sup_\cM E^{(\sigma)}_{r,\{\cM\}}(\rho) - \sup_{\cM'} E^{(\sigma)}_{r,\{\cM'\}}(\rho') \bigr| \\
           &\leq \sup_{\cM\in\mathsf{M}}
                  \bigl| E^{(\sigma)}_{r,\{\cM\}}(\rho) - E^{(\sigma)}_{r,\{\cM\}}(\rho') \bigr| \\
           &\leq 2\epsilon \log \frac{3k}{\epsilon},
  \end{split}\]
  invoking eq.~(\ref{eq:fixed-state-and-POVM}). Similarly,
  \[\begin{split}
    \bigl| E^{(\rG_x)}_{r,\mathsf{M}}(\rho) - E^{(\rG_x)}_{r,\mathsf{M}}(\rho') \bigr|
           &= \bigl| \inf_\sigma E^{(\sigma)}_{r,\mathsf{M}}(\rho)
                                       - \inf_{\sigma'} E^{(\sigma')}_{r,\mathsf{M}}(\rho') \bigr| \\
           &\leq \sup_{\sigma\in\rG_x}
                  \bigl| E^{(\sigma)}_{r,\mathsf{M}}(\rho) - E^{(\sigma)}_{r,\mathsf{M}}(\rho') \bigr| \\
           &\leq 2\epsilon \log \frac{3k}{\epsilon},
  \end{split}\]
  using the relation for fixed $\sigma$.
  From this and eq.~(\ref{eq:D-Dx}), using $-\log x = -\log(1-\epsilon) \leq 2\epsilon$,
  the proposition follows.
\end{proofof}

\section{Evaluation on maximally entangled states and pure states}
\label{sec:MES-Pure-states}
\begin{proofof}[of Proposition~\ref{proposition:EPR-evaluation}]
We show separately $\ERLO(\Phi_d) \geq \log(d+1)-1$ and $\ERPPT(\Phi_d) \leq \log(d+1)-1$,
which together complete the proof, since by definition,
$\ERLO \leq \ERONELOCC \leq \ERLOCC \leq \ERSEP \leq \ERPPT$.

For the former, we need to show that for each separable
state there exists an LO measurement such that
the relative entropy of the measurement outcomes is at least
$\log\frac{d+1}{2}$.
In fact, it suffices to employ the $U\ox\overline{U}$-twirl followed by
local measurements in the computational basis. Although this requires
shared randomness, it is easy to see that derandomization can be done due to the joint
convexity of relative entropy.
The twirl leaves $\Phi_d$ invariant and transforms the separable state into a
separable \emph{isotropic} state
\[
  \sigma = p\Phi_d + (1-p)\frac{1}{d^2}\1,
\]
where the separability is equivalent to $p\leq \frac{1}{d+1}$~\cite{Horodecki99}.
Now, the measurement of the maximally entangled state and of $\sigma$
yield distributions
\begin{align*}
  P(xy|\Phi_d) &= \frac{1}{d}\delta_{xy}, \\
  Q(xy|\sigma) &= \frac{p}{d}\delta_{xy} + \frac{1-p}{d^2}. \\
\end{align*}
From this it is straightforward to calculate the relative entropy
\[\begin{split}
  D(P\|Q) &= \sum_x \frac{1}{d}\log\frac{1/d}{p/d+(1-p)/d^2} \\
          &= -\log\left( p+\frac{1-p}{d} \right)             \geq \log\frac{d+1}{2},
\end{split}\]
and we are done.

For the second (upper) bound, we need to show that there is no better
measurement once we choose an appropriate separable state, which predictably
we set
\[
  \sigma = \frac{1}{d+1}\Phi_d + \frac{d}{d+1}\frac{1}{d^2}\1
         = \frac{1}{d}\Phi_d + \frac{d-1}{d}\frac{1}{d^2-1}(\1-\Phi_d).
\]
Now our entangled state and the separable candidate are isotropic,
This means that whatever PPT measurement we have, i.e.~with POVM
elements $M_i$ such that $M_i^\Gamma \geq 0$, the covariant POVM
$\bigl( {\rm d}U (U\ox\overline{U}) M_i (U\ox\overline{U})^\dagger \bigr)_{i,U}$
will achieve the same relative entropy. Note however that
the probabilities $\tr\rho(U\ox\overline{U}) M_i (U\ox\overline{U})^\dagger$
are independent of the unitary $U$ for isotropic $\rho \in \{ \Phi_d, \sigma \}$,
so we get the same relative entropy for the twirled POVM with operators
\[
  \widehat{M}_i = \int {\rm d}U (U\ox\overline{U}) M_i (U\ox\overline{U})^\dagger,
\]
which are all isotropic: $\widehat{M}_i = \alpha_i \Phi_d + \beta_i (\1-\Phi_d)$,
with $\alpha_i,\beta_i \geq 0$ and separately adding up to $1$. The PPT
condition is $\beta_i \geq \frac{1}{d+1}\alpha_i$ for all $i$.
Next, the maximum of the relative entropy will be attained on an extremal
measurement from this class, which restricts (w.l.o.g.) the number of outcomes
to two. The only nontrivial POVM with these properties is
composed of the two operators
\begin{align*}
  \widehat{M}_0 &= \Phi_d +           \frac{1}{d+1}(\1-\Phi_d), \\
  \widehat{M}_1 &= \phantom{\Phi_d +} \frac{d}{d+1}(\1-\Phi_d).
\end{align*}
For this measurement, the probabilities observed on $\Phi_d$ are $1$ and $0$,
respectively; for the above $\sigma$ they are $\frac{2}{d+1}$ and $\frac{d-1}{d+1}$,
yielding indeed a relative entropy of $\log\frac{d+1}{2}$.

\medskip
Now, we can conclude that $E_{r,\mathsf{M}}^\infty$, with $\mathsf{M}$ be any of
$\{ \LO, \ONELOCC, \LOCC, \SEP, \PPT \}$ coincides with the entropic entanglement
measure on pure states. This follows now easily from the asymptotic theory of pure
state entanglement and the asymptotic continuity. To be precise, let $\psi$ be a pure
state on $A \ox B$; then there is a sequence of $\epsilon_n \rightarrow 0$ and of
LO protocols(!) to convert $\psi^{\ox n}$ into $\rho^{(n)}$ with
$\left\| \Phi_2^{\ox n(E(\psi)-\epsilon_n)} - \rho^{(n)} \right\|_1 \leq \epsilon_n$.
By the monotonicity of $E_{r,\mathsf{M}}$ under local operations and for large enough $n$,
\[\begin{split}
  E_{r,\mathsf{M}}(\psi^{\ox n}) &\geq E_{r,\mathsf{M}}\left( \rho^{(n)} \right) \\
                        &\geq E_{r,\mathsf{M}}\left( \Phi_2^{\ox n(E(\psi)-\epsilon_n)} \right)
                                - 2\epsilon_n \log\frac{6\cdot2^{2nE(\psi)}}{\epsilon_n}\\
                        &\geq n E(\psi) - O(n)\epsilon_n - O(1).
\end{split}\]
Conversely, there are one-way LOCC protocols to convert
$\Phi_2^{\ox n(E(\psi)+\epsilon_n)}$ into $\omega^{(n)}$ with
$\left\| \psi^{\ox n} - \omega^{(n)} \right\|_1 \leq \epsilon_n$. Hence, for large
enough $n$,
\[\begin{split}
  E_{r,\mathsf{M}}(\psi^{\ox n}) &\leq E_{r,\mathsf{M}}\left( \omega^{(n)} \right)
                                + 2\epsilon_n \log\frac{6|A|^n |B|^n}{\epsilon_n} \\
                        &\leq E_{r,\mathsf{M}}\left( \Phi_2^{\ox n(E(\psi)+\epsilon_n)} \right)
                                + 2\epsilon_n \log\frac{6|A|^n |B|^n}{\epsilon_n} \\
                        &\leq n E(\psi) + O(n)\epsilon_n + O(1).
\end{split}\]

Together, we obtain, for $\mathsf{M} \in \{ \ONELOCC, \LOCC, \SEP, \PPT \}$,
\[
  \left| \frac{1}{n}E_{r,\mathsf{M}}(\psi^{\ox n}) - E(\psi) \right| \leq O(\epsilon_n) + O\left(\frac{1}{n}\right)
                                                           \rightarrow 0,
\]
as $n \rightarrow \infty$.
For $\LO$ the above reasoning does not apply because
we need one-way LOCC operations in the converse (dilution) part. However, as
$E_{r,\LO} \leq E_{r,\ONELOCC}$, the lower bound for the former and the upper
bound for the latter suffice.
\end{proofof}

\section{Comparisons between entanglement measures}
\label{sec:comparison}
In this section, we consider the relations between entanglement measures. Especially,
we are interested in two classes of them. The first class consists of squashed-like
measures. This includes the squashed entanglement $E_{sq}$ itself, the conditional entanglement
of mutual information $E_I(\rho_{AB}):=\frac{1}{2}\inf\{I(AA';BB')_\rho-I(A';B')_\rho\}$
with $\rho_{AA'BB'}$ being an extension of $\rho_{AB}$~\cite{YHW08}, and the c-squashed
entanglement $E_{sq,c}(\rho_{AB}):=\frac{1}{2}\inf\{I(A;B|E)_\rho\}$, where the infimum is
taken over all the extension state $\rho_{ABE}$ of the form $\sum_ip_i\rho^i_{AB}\ox
\proj{i}_E$~\cite{YHHHOS07}. It is known that these entanglement measures satisfy the
chain of inequalities~\cite{Christandl06,YHW07}
\begin{equation*}
E_d \leq K_d \leq E_{sq} \leq E_I \leq E_{sq,c}^\infty \leq E_c,
\end{equation*}
where $E_d$ is the distillable entanglement, $K_d$ is the distillable key and $E_c$
the entanglement cost.

The other class contains the relative entropy of entanglement $E_r$
and its relatives $E_{r,\leftrightarrow}(\rho_{AB}):=E_{r,\LOCC}(\rho_{AB})$ and
$E_{r,\rar}(\rho_{AB}):=\sup\{E_{r,\ONELOCC}(\Lambda(\rho_{AB})): \Lambda \text{ being LOCC}\}$.
Here $E_{r,\rar}$ is an ``update'' of $E_{r,\ONELOCC}$ such that it is LOCC monotone. Note that
in the definition of $E_{r,\rar}$, the supremum is taken over all LOCC operations, in contrast
to the smaller set of LOCC measurements. It is known that~\cite{Hors-Opp05}
\begin{equation*}
  E_d \leq K_d \leq E_r^\infty \leq E_c.
\end{equation*}
On the other hand, it is obvious from the definitions that
$E_{r,\rar}^\infty \leq E_{r,\leftrightarrow}^\infty \leq E_r^\infty$,
and we will show that $E_d \leq E_{r,\rar}^\infty$ later in
Proposition~\ref{proposition:relation-Errar-Ed}. Hence, we have also
\begin{equation*}
  E_d \leq E_{r,\rar}^\infty \leq E_{r,\leftrightarrow}^\infty \leq E_r^\infty \leq E_c.
\end{equation*}

Although these two classes of entanglement measures are defined in different ways, we
are able to make comparisons between them, and obtain the relations in
Proposition~\ref{propositions:relation-Esq-Er}.

\begin{proposition}
  \label{propositions:relation-Esq-Er}
  The following universal relations between entanglement measures hold:
  \begin{enumerate}
    \setlength\itemindent{6.25cm}
    \item $2E_{sq,c}^\infty \geq  E_r^\infty$,
    \item $2E_I             \geq  E_{r,\leftrightarrow}^\infty$,
    \item $2E_{sq}          \geq  E_{r,\rar}^\infty$.
  \end{enumerate}
  These relations also hold true if we replace the regularized entanglement measures
  by their corresponding non-regularized versions.
\end{proposition}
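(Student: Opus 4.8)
The three inequalities all have the shape $2F \geq G$, where $F$ is a squashed-type measure (the half-infimum of a conditional-mutual-information expression over a class of extensions of $\rho_{AB}$) and $G$ is a restricted relative entropy of entanglement. For each I would first prove a single-copy \emph{engine} inequality, valid for every admissible extension, bounding the conditional-mutual-information expression below by the relevant $E_{r,\mathsf{M}}(\rho_{AB})$, and then take the infimum over extensions. The regularized and non-regularized versions both come out, either because the engine already produces a regularized $E_{r,\mathsf{M}}^\infty$ on the right (and $E_{r,\mathsf{M}}^\infty \geq E_{r,\mathsf{M}}$), or by applying the single-copy bound to $\rho^{\ox n}$ and using (super-)additivity together with the known LOCC-monotonicity, convexity and additivity of the squashed-type measures.

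For (1), the extension $\rho_{ABE}=\sum_i p_i\rho_{AB}^i\ox\proj{i}_E$ is classical on $E$, so $I(A;B|E)=\sum_i p_i I(A;B)_{\rho^i}$. Since $I(A;B)_\omega = D(\omega_{AB}\|\omega_A\ox\omega_B)\geq E_r(\omega)$---the product of the marginals being separable---and $E_r$ is convex, I get $I(A;B|E)\geq\sum_i p_i E_r(\rho^i)\geq E_r(\rho_{AB})$. Taking the infimum over classical extensions yields $2E_{sq,c}(\rho_{AB})\geq E_r(\rho_{AB})$; applying this to $\rho^{\ox n}$, dividing by $n$ and letting $n\rar\infty$ gives the regularized $2E_{sq,c}^\infty\geq E_r^\infty$. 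For (3), I use that $E_{sq}$ is LOCC-monotone and additive. For any LOCC map $\Lambda$, monotonicity and Theorem~\ref{theorem:squshed-new-bound} give $2E_{sq}(\rho_{AB})\geq 2E_{sq}(\Lambda(\rho_{AB}))\geq E_{r,\ONELOCC}^\infty(\Lambda(\rho_{AB}))\geq E_{r,\ONELOCC}(\Lambda(\rho_{AB}))$; the supremum over $\Lambda$ produces $2E_{sq}(\rho_{AB})\geq E_{r,\rar}(\rho_{AB})$. By additivity of $E_{sq}$, the same bound applied to $\rho^{\ox n}$ reads $2E_{sq}(\rho_{AB})=\frac{2}{n}E_{sq}(\rho_{AB}^{\ox n})\geq\frac{1}{n}E_{r,\rar}(\rho_{AB}^{\ox n})$, and the limit delivers $2E_{sq}\geq E_{r,\rar}^\infty$.

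Part (2) is the hard one, since it targets the two-way measure $E_{r,\leftrightarrow}^\infty=E_{r,\LOCC}^\infty$, whereas the monogamy engine of Theorem~\ref{theorem:Er-monogamy} and Lemma~\ref{lemma:nondemolition-measurement} is inherently one-way. The engine I would establish is $I(AA';BB')-I(A';B')\geq E_{r,\LOCC}^\infty(\rho_{AB})$ for every extension $\rho_{AA'BB'}$; taking the infimum then gives $2E_I\geq E_{r,\leftrightarrow}^\infty$ (and the non-regularized form, as $E_{r,\leftrightarrow}^\infty\geq E_{r,\leftrightarrow}$). The relevant structural fact is the decomposition
\begin{equation*}
  I(AA';BB')-I(A';B') = I(A;BB'|A') + I(A';B|B'),
\end{equation*}
which shows that \emph{both} parties hold an ancilla ($A'$ for Alice, $B'$ for Bob). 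Each term in isolation yields only a one-way bound---e.g.\ $I(A;BB'|A')\geq E_{r,\ONELOCC}^\infty(\rho_{A:BB'})\geq E_{r,\ONELOCC}^\infty(\rho_{AB})$ for the $A\rar B$ direction, while the second term concerns the systems $A'B$ rather than $AB$---so the crux is to combine them into a genuine two-way bound on $\rho_{AB}$.

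My plan for this engine is to upgrade Lemma~\ref{lemma:nondemolition-measurement} to a two-round protocol in which Alice and Bob alternately perform local gentle (nondemolition) measurements assisted by their ancillas $A',B'$ and exchange classical messages, realizing an arbitrary two-way LOCC measurement on $AB$ while each round only gently disturbs the state the other party needs, and with $-I(A';B')$ emerging as the correction for the pre-existing correlations in the extension. The main obstacle is precisely the control of disturbance across the two-way rounds: unlike the one-way case, feedback from Bob's measurement damages Alice's side, so one must order the gentle measurements and bound the accumulated trace-distance error carefully, exploiting the split structure of the extension so that the total disturbance still vanishes in the limit. Once the engine holds, both the regularized and non-regularized forms of (2) follow immediately as above.
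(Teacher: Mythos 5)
Your parts (1) and (3) are correct and essentially the paper's own arguments: in (1) the paper contracts the ensemble using joint convexity of relative entropy where you use $I(A;B)_{\rho^i}\geq E_r(\rho^i)$ plus convexity of $E_r$ (an immaterial difference), and in (3) the paper likewise combines LOCC monotonicity and additivity of $E_{sq}$ with Theorem~\ref{theorem:squshed-new-bound} and the definition of $E_{r,\rar}$. You have also correctly isolated the engine needed for (2), namely $I(AA';BB')_\rho-I(A';B')_\rho\geq E_{r,\LOCC}^\infty(\rho_{AB})$ for every extension $\rho_{AA'BB'}$ of $\rho_{AB}$, which is exactly what the paper establishes.

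The gap is that you do not prove this engine; you only sketch a plan, and the plan is the one approach that is known to be problematic. Upgrading Lemma~\ref{lemma:nondemolition-measurement} to two-way LOCC runs into the obstruction the paper itself points out in Section~\ref{sec:hypo}: the lemma preserves only $\rho_{BE}^{\ox n}$, because extracting the classical message $k^n$ necessarily damages Alice's share; in a two-way protocol Bob's return messages would equally damage his share, so neither marginal can be kept gentle while both parties measure adaptively, and your sketch offers no mechanism by which the $-I(A';B')$ term would emerge from a hypothesis-testing analysis. The paper proves the engine by a completely different, operational route, in the spirit of~\cite[Lemma 1]{BCY10}: the partial state merging protocol of~\cite{YHW08} transfers $A$ and $B$ to a referee at qubit sum-rate $\frac{1}{2}\bigl(I(AA';BB')-I(A';B')\bigr)$; since $E_r$ is unlockable~\cite{Hors-Opp052}, its regularization can decrease by at most $2$ per transmitted qubit, whence $I(AA';BB')-I(A';B')\geq E_r^\infty(\rho_{AA':BB'})-E_r^\infty(\rho_{A':B'})$; and Piani's superadditivity relation~\cite{Piani09}, $E_r^\infty(\rho_{AA':BB'})\geq E_{r,\LOCC}^\infty(\rho_{AB})+E_r^\infty(\rho_{A':B'})$, turns this difference into $E_{r,\leftrightarrow}^\infty(\rho_{AB})$. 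If you replace your two-way gentle-measurement plan with these two ingredients applied to the state-merging rate, the rest of your structure (infimum over extensions, and $E_{r,\leftrightarrow}^\infty\geq E_{r,\leftrightarrow}$ for the non-regularized form) goes through unchanged.
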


\begin{proof}
The first inequality is easy. For any classical extension
$\rho_{ABE}=\sum_i p_i\rho^i_{AB} \ox \proj{i}_E$ of a state $\rho_{AB}$, we have
\begin{equation*}\begin{split}
I(A;B|E)_\rho &=    \sum_i p_i D(\rho_{AB}^i \| \rho_A^i \ox \rho_B^i)  \\
              &\geq D\left( \rho_{AB} \| \sum_i p_i\rho_A^i \ox \rho_B^i \right)    \\
              &\geq E_r(\rho_{AB}),
\end{split}\end{equation*}
using the joint convexity of the relative entropy.
This, together with the definition of $E_{sq,c}$, implies that $2E_{sq,c}\geq E_r$.
Regularizing both sides, we get the regularized version as desired.

For the second inequality, we employ the idea for the proof of~\cite[Lemma 1]{BCY10}, and
apply it to the partial state merging protocol~\cite{YHW08}. Let $\rho_{AA'BB'E}$ be a
pure state, where the $AA'$ system is with Alice, $BB'$ is with Bob, and $E$ is at Eve's
hand. Alice and Bob are to transmit their systems $A$ and $B$ to Eve, by sending as less
as possible qubits to her, provided that unlimited entanglement is available between Alice
(Bob) and Eve. In the i.i.d. case, this task can be expressed as the transformation
$\rho_{AA':BB':E}^{\ox n}\longrightarrow\rho_{A':B':EAB}^{\ox n}$. Asymptotically, it
requires a minimal sum-rate $\frac{1}{2}\{I(AA';BB')-I(A';B')\}$ of quantum
communication~\cite{YHW08}. On the other hand, because the relative entropy of entanglement
$E_r$ is unlockable~\cite{Hors-Opp052},  the decrease of entanglement between Alice and Bob
in this protocol, measured by $E_r$, is no larger than $2$ times the qubits transmitted.
This means
\begin{equation}
  \label{eq:EI-Er-1}
  I(AA';BB')-I(A';B') \geq E_r^\infty(\rho_{AA':BB'})-E_r^\infty(\rho_{A':B'}).
\end{equation}
The right side of Eq.~(\ref{eq:EI-Er-1}) satisfies~\cite{Piani09}
\begin{equation}
  \label{eq:EI-Er-2}
  E_r^\infty(\rho_{AA':BB'})-E_r^\infty(\rho_{A':B'})\geq E_{r,\leftrightarrow}^\infty
                                (\rho_{AB})\geq E_{r,\leftrightarrow}(\rho_{AB}).
\end{equation}
Eqs.~(\ref{eq:EI-Er-1}) and (\ref{eq:EI-Er-2}), together with the definition of $E_I$,
lead to the second inequality and its non-regularized version as advertised.

The last inequality and its non-regularized version is essentially due to
Theorem~\ref{theorem:squshed-new-bound}, since squashed entanglement is non-increasing
under any LOCC operations.
\end{proof}

\begin{proposition}
  \label{proposition:relation-Errar-Ed}
  We have $E_{r,\rar}^\infty \geq  E_d$.
\end{proposition}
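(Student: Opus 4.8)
The plan is to feed a good distillation protocol into the defining supremum of $E_{r,\rar}$ and then measure the distilled output with $\ERONELOCC$, whose value on maximally entangled states is pinned down by Proposition~\ref{proposition:EPR-evaluation}. The key structural remark is that, for \emph{any} LOCC map $\Lambda$, the definition of $E_{r,\rar}$ gives immediately $E_{r,\rar}(\rho_{AB}^{\ox n}) \geq \ERONELOCC\bigl(\Lambda(\rho_{AB}^{\ox n})\bigr)$; choosing $\Lambda$ to be a near-optimal distillation protocol, its output is close to a large maximally entangled state, on which $\ERONELOCC$ is nearly maximal.

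First I would fix a rate $R < E_d(\rho_{AB})$. By the definition of distillable entanglement there is a sequence of LOCC operations $\Lambda_n$ and numbers $\epsilon_n \rar 0$ such that $\omega^{(n)} := \Lambda_n(\rho_{AB}^{\ox n})$ obeys $\bigl\| \omega^{(n)} - \Phi_{2^{nR}} \bigr\|_1 \leq \epsilon_n$. Specializing the supremum defining $E_{r,\rar}$ to $\Lambda = \Lambda_n$ yields $\frac1n E_{r,\rar}(\rho_{AB}^{\ox n}) \geq \frac1n \ERONELOCC(\omega^{(n)})$, so it remains to lower bound the right-hand side.

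Next I would compare $\omega^{(n)}$ with $\Phi_{2^{nR}}$ via asymptotic continuity. Since $\SEP$ contains the maximally mixed state and is star-shaped with respect to it, Proposition~\ref{proposition:asycontinuity} applies with $\rG = \SEP$, $\mathsf{M} = \ONELOCC$ and Hilbert-space dimension $k = 2^{2nR}$; because the measured norm is dominated by the trace norm, we have $\| \omega^{(n)} - \Phi_{2^{nR}} \|_{\ONELOCC} \leq \epsilon_n \leq \frac1e$ for $n$ large. Together with $\ERONELOCC(\Phi_{2^{nR}}) = \log(2^{nR}+1) - 1$ from Proposition~\ref{proposition:EPR-evaluation}, this gives
\[
  \ERONELOCC(\omega^{(n)}) \geq \log(2^{nR}+1) - 1 - 2\epsilon_n \log\frac{6\cdot 2^{2nR}}{\epsilon_n}.
\]
Dividing by $n$ and sending $n \rar \infty$, the leading term tends to $R$, the constant is washed out, and the error $\frac2n\epsilon_n\bigl(\log 6 + 2nR - \log\epsilon_n\bigr)$ vanishes term by term because $\epsilon_n \rar 0$. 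Hence $E_{r,\rar}^\infty(\rho_{AB}) \geq R$ for every $R < E_d(\rho_{AB})$, and taking the supremum over $R$ delivers $E_{r,\rar}^\infty \geq E_d$.

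The one delicate point --- and the main obstacle --- is that the dimension entering the continuity bound grows exponentially, $\log k = 2nR$, so the correction term is a priori of order $n$ and threatens to survive the normalization by $\frac1n$. What rescues the estimate is that this $O(n)$ factor is multiplied by $\epsilon_n \rar 0$, leaving a contribution $O(\epsilon_n)$ (explicitly $4R\epsilon_n$ plus lower-order pieces such as $\frac1n\epsilon_n\log\frac1{\epsilon_n}$), which still tends to zero. This is precisely the same cancellation exploited in the pure-state computation at the end of Section~\ref{sec:MES-Pure-states}, and it is the only nontrivial ingredient beyond the two propositions already at hand.
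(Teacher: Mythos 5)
Your proof is correct and follows essentially the same route as the paper's: feed a near-optimal distillation map into the supremum defining $E_{r,\rar}$, then combine the exact value $\ERONELOCC(\Phi_d)=\log(d+1)-1$ from Proposition~\ref{proposition:EPR-evaluation} with the asymptotic continuity of Proposition~\ref{proposition:asycontinuity}, noting that the $O(n)$ dimension factor in the continuity bound is killed by the vanishing error. The only (immaterial) difference is bookkeeping: you fix a rate $R<E_d$ and let $\epsilon_n\rar 0$ along the sequence, whereas the paper works at fixed $\epsilon$, absorbs the $4\epsilon\log d_n$ term into a $(1-4\epsilon)$ denominator, and takes $\epsilon\rar 0$ at the end.
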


\begin{proof}
Let $\Lambda_n$ be a LOCC operation that satisfies
\[\|\Lambda_n(\rho_{AB}^{\ox n})-\Phi_{d_n}\|_1 \leq \epsilon\]
with $\epsilon\leq \frac{1}{e}$. We have
\begin{equation}
  \label{eq:Errar-Ed-2}
  E_{r,\rar}(\rho_{AB}^{\ox n})
     \geq E_{r,\ONELOCC}(\Lambda_n(\rho_{AB}^{\ox n}))
     \geq \log \frac{d_n+1}{2}-2\epsilon\log\frac{6d_n^2}{\epsilon},
\end{equation}
where the first inequality is by definition of $E_{r,\rar}$, and the second one makes use of
 Proposition~\ref{proposition:asycontinuity} and Proposition~\ref{proposition:EPR-evaluation}.
Recall that the distillable entanglement can be written as
\begin{equation}
  \label{eq:Errar-Ed-3}
  E_d(\rho_{AB})= \lim_{\epsilon\rar 0}\lim_{n\rar\infty}\sup_{\Lambda_n\in\LOCC}
  \left\{\frac{\log d_n}{n}:\|\Lambda_n(\rho^{\ox n})-\Phi_{d_n}\|_1\leq\epsilon\right\}.
\end{equation}
Eq.~(\ref{eq:Errar-Ed-2}) and Eq.~(\ref{eq:Errar-Ed-3}) together imply
\begin{equation*}\begin{split}
  E_d(\rho_{AB}) &\leq \lim_{\epsilon\rar 0}\lim_{n\rar\infty}
           \frac{E_{r,\rar}(\rho_{AB}^{\ox n})+1-2\epsilon\log\epsilon+2\epsilon\log 6}{n(1-4\epsilon)}\\
                 &= E_{r,\rar}^\infty(\rho_{AB})
\end{split}\end{equation*}
and we are done.
\end{proof}

\medskip
We summarize the relations between these entanglement measures in Fig.~\ref{fig1}. Since
we are mainly interested in the regularized versions, some relations between the
non-regularized entanglement measures are not reflected here. These include $E_{sq,c}
\leq E_f$, $E_r\leq E_f$, $E_r\leq 2E_{sq,c}$ and  $E_{r,\rar}\leq E_{r,\leftrightarrow}$ ($E_f$ is the
entanglement of formation). Some pairs of these entanglement measures are incomparable, meaning
that -- depending on the state -- they can be larger than each other. This is really the case
for $E_{sq}$ and $E_r$ ($E_r^\infty$). $E_{sq}\gg E_r$ is known for certain ``flower
states'', due to the lockability of $E_{sq}$ and non-locking of $E_r$~\cite{Hors-Opp052,
Christandl-Winter05}; the other direction $E_r^\infty\gg E_{sq}$ holds for $d\times d$
antisymmetric states~\cite{CSW09-1,CSW09-2}. We conjecture that the same situation occurs
between $E_I$ and  $E_r$ ($E_r^\infty$), $E_{sq}$ and $E_{r,\leftrightarrow}^\infty$
($E_{r,\leftrightarrow}$), $K_d$ and $E_{r,\leftrightarrow}^\infty$, $K_d$ and
$E_{r,\rar}^\infty$, which are left as open questions. Note that the possibility of
$E_I > E_r$ and $E_{sq} > E_{r,\leftrightarrow}^\infty$ for certain states, are known from
the relations in Fig.~\ref{fig1} and that $E_{sq}$ can be larger than $E_r$.

\begin{figure}[ht]
  \includegraphics[width=9.5cm]{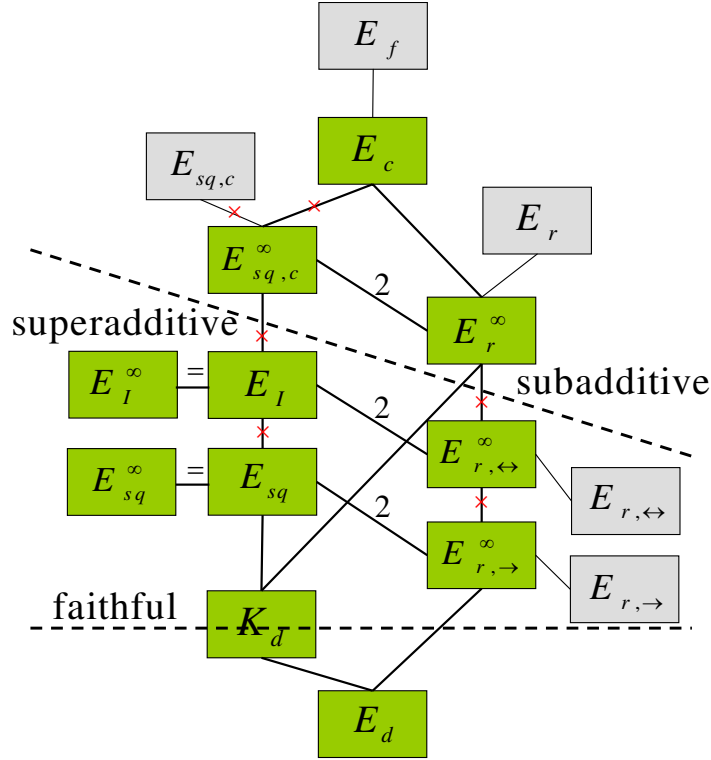}
  \caption{Relations between some entanglement measures. When two quantities are connected
  by a line with a constant above (constant $1$ is omitted), it means that the higher one
  multiplied by the constant is no smaller than the lower one. For those entanglement
  measures of which the separation is still unknown, we mark a red cross on the line that
  connects them. The upper dashed line divides
  these entanglement measures into two groups: the upper ones are subadditive and the lower
  ones are superadditive. Entanglement measures above the lower dashed line are faithful,
  while the only one below this line, $E_d$, is not faithful~\cite{Horodeckis98}. Whether
  the distillable key, $K_d$, is faithful or not, is still an open question. Hence, we put
  the line on it.}
  \label{fig1}
\end{figure}

\bigskip
The separation between entanglement measures is another interesting topic.
Proposition~\ref{proposition:EPR-evaluation} provides us with the strict inequalities
$E_{r,\leftrightarrow} < E_{r,\leftrightarrow}^\infty$ and $E_{r,\rar} < E_{r,\rar}^\infty$
for maximally entangled states. The fact that $2E_{sq,c}^\infty \geq  E_r^\infty$
(cf. Proposition~\ref{propositions:relation-Esq-Er}) and $E_r^\infty$ can be much larger than
$E_{sq}$ implies the separation between $E_{sq,c}^\infty$ and $E_{sq}$, disproving the
conjecture that $E_{sq,c}$ and $E_{sq}$ may be the same~\cite{YHW07}. Similarly, the relations
shown in Fig.~\ref{fig1}, together with the fact that $E_{sq}$ and $E_r^\infty$ can be much
larger than the other, lead to separations for the pairs $(E_c, E_r^\infty)$,
$(E_{sq,c}^\infty, E_r^\infty)$, $(E_r^\infty, E_{r,\rar}^\infty)$,  $(E_r^\infty, K_d)$,
$(E_I, E_{r,\leftrightarrow}^\infty)$, $(E_{sq}, E_{r,\rar}^\infty)$ and $(E_{sq}, K_d)$.
Separation between $E_{r,\rar}^\infty$ and $E_d$ is witnessed by the bound entangled states,
since the former is faithful. A separation between
$E_d$ and $K_d$~\cite{Hors-Opp05} had been discovered previously,
that between $E_c$ and $E_f$ is by Hastings~\cite{Hastings08, Shor03},
that between $E_r$ and $E_r^\infty$ due to Vollbrecht and Werner~\cite{Voll-Werner01}.

At last, separations between pairs of entanglement
measures that are still unknown, are marked in Fig.~\ref{fig1},
and we leave them as open questions.

\acknowledgments
We thank Fernando Brand\~{a}o, Matthias Christandl, Runyao Duan, Aram Harrow,
Masahito Hayashi and Dong Yang for helpful discussions.
AW was supported by the European Commission (STREP ``QCS'' and
IP ``QESSENCE''), the ERC (Advanced Grant ``IRQUAT''),
a Royal Society Wolfson Merit Award and a Philip Leverhulme Prize.
The Centre for Quantum Technologies is funded by the Singapore
Ministry of Education and the National Research Foundation as part
of the Research Centres of Excellence programme.


\end{document}